\newcommand\encadremath[1]{\vbox{\hrule\hbox{\vrule\kern8pt
\vbox{\kern8pt \hbox{$\displaystyle #1$}\kern8pt}
\kern8pt\vrule}\hrule}}
\def\enca#1{\vbox{\hrule\hbox{
\vrule\kern8pt\vbox{\kern8pt \hbox{$\displaystyle #1$}
\kern8pt} \kern8pt\vrule}\hrule}}
\newcommand\framefig[1]{
\begin{figure}[bth]
\hrule\hbox{\vrule\kern8pt
\vbox{\kern8pt \vbox{
\begin{center}
{#1}
\end{center}
}\kern8pt}
\kern8pt\vrule}\hrule
\end{figure}
}
\newcommand\figureframex[3]{
\begin{figure}[bth]
\hrule\hbox{\vrule\kern8pt
\vbox{\kern8pt \vbox{
\begin{center}
{\mbox{\epsfxsize=#1.truecm\epsfbox{#2}}}
\end{center}
\caption{#3}
}\kern8pt}
\kern8pt\vrule}\hrule
\end{figure}
}
\newcommand\figureframey[3]{
\begin{figure}[bth]
\hrule\hbox{\vrule\kern8pt
\vbox{\kern8pt \vbox{
\begin{center}
{\mbox{\epsfysize=#1.truecm\epsfbox{#2}}}
\end{center}
\caption{#3}
}\kern8pt}
\kern8pt\vrule}\hrule
\end{figure}
}
\newtheorem{theorem}{Theorem}[section]
\newtheorem{proposition}{Proposition}[section]
\newtheorem{lemma}{Lemma}[section]
\newtheorem{corollary}{Corollary}[section]
\theoremstyle{definition}
\newtheorem{remark}{Remark}[section]
\newtheorem{definition}{Definition}[section]
\def\br{\begin{remark}\rm\small}
\def\er{\end{remark}}
\def\bt{\begin{theorem}}
\def\et{\end{theorem}}
\def\bd{\begin{definition}}
\def\ed{\end{definition}}
\def\bp{\begin{proposition}}
\def\ep{\end{proposition}}
\def\bl{\begin{lemma}}
\def\el{\end{lemma}}
\def\bc{\begin{corollary}}
\def\ec{\end{corollary}}
\def\beaq{\begin{eqnarray}}
\def\eeaq{\end{eqnarray}}
\theoremstyle{definition}
\newcommand{\be}{\begin{equation}}
\newcommand{\ee}{\end{equation}}
\newcommand{\beq}{\begin{equation}}
\newcommand{\eeq}{\end{equation}}
\newcommand{\bea}{\begin{eqnarray}}
\newcommand{\eea}{\end{eqnarray}}
\newcommand{\beqq}{\begin{equation*}}
\newcommand{\eeqq}{\end{equation*}}
\newcommand{\beaa}{\begin{eqnarray*}}
\newcommand{\eeaa}{\end{eqnarray*}}
\newcommand{\td}{\tilde}
\appto\appendix{\counterwithin{equation}{section}}
\title{\bf{On the geometry of isomonodromic deformations on the torus and the elliptic Calogero-Moser system}}
\date{\vspace{-5ex}}
\author{Mohamad Alameddine}
\begin{document}

\maketitle

\begin{center}
\footnotesize{
Universit\'{e} Jean Monnet Saint-\'{E}tienne, CNRS, Institut Camille Jordan UMR 5208, F-42023, Saint-\'{E}tienne, France. 
}
\end{center}
\vspace{1.0cm}
\begin{abstract}
We consider isomonodromic deformations of connections with a simple pole on the torus, motivated by the elliptic version of the sixth Painlevé equation. We establish an extended symmetry, complementing known results. The Calogero-Moser system in its elliptic version is shown to fit nicely in the geometric framework, the extended symplectic two-form is introduced and shown to be closed.
\end{abstract}

 \textbf{\textit{Keywords;}} Isomonodromic deformations, elliptic functions, elliptic Calogero-Moser system, non-autonomous Hamiltonian system. 
\tableofcontents
\newpage

\section{Introduction}

Since its introduction by Riemann in his paper in 1857, the idea of monodromy preserving deformations has known immense developments, the case of regular singularities was solved by L. Schlesinger \cite{schlesinger1912klasse} and L. Fuchs \cite{fuchs1907lineare}, leading to the famous set of equations named after Schlesinger in the Fuchsian case.  These deformations became an extensive field of studies when the sixth Painlev\'{e} equation was derived from a Fuchsian system and the link with the other Painlev\'{e} equations and the Painlev\'{e} property motivated a huge community which contributed \cite{Fuchs,Gambier,Garnier,Painleve,Picard,schlesinger1912klasse} to our advanced actual understanding of the regular case. The topic was then pursued by R. Garnier and K. Okamoto \cite{okamoto1979deformation,Okamoto1986Iso,Okamoto1986} who studied Garnier systems in their scalar version or Schlesinger systems \cite{schlesinger1912klasse} in their matrix form. Ultimately, it was Garnier who extended the results to all other Painlev\'{e} equations obtaining them as completely integrable conditions. The Hamiltonian formulation of the Painlev\'{e} equations was achieved by A.J. Malmquist \cite{Malmquist1922} while the relations with isomonodromic deformations of linear ordinary differential equations with irregular singularities was given by Okamoto \cite{Okamoto1980}. The Lax representation of the Painlevé equations featured several approaches and methods, the revival of the interest in these methods was done by the Japanese school of Jimbo, Miwa and Ueno \cite{JimboMiwaUeno,JimboMiwa} when they considered a huge family of isomonodromic deformations of meromorphic connections defined on the trivial rank-$n$ vector bundle over the Riemann sphere with an arbitrary pole structure of arbitrary order. Their construction inspired methods aiming to understand the nature of this huge family of deformations, of significant importance are; the \textit{isospectral method} developed by the Montr\'eal school \cite{BertolaHarnadHurtubise2022,HarnadHurtubise}, the \textit{confluence of poles method} \cite{Gaiur2023}, and the \textit{isomonodromic oper gauge method} relying on the symplectic reduction of the Hamiltonian system \cite{marchal2024hamiltonianrepresentationisomonodromicdeformations,MarchalAlameddineP1Hierarchy2023}. \\

Another inspiring approach relies on the study of the symplectic structure hidden behind these systems \cite{AtiyahBott}, linking the studies of fundamental groups of Riemann surfaces to the study of the symplectic structure of these moduli spaces of meromorphic connections. A direction adopted by P. Boalch \cite{BoalchThesis,Boalch2001} which lead to deeper understanding of the \textit{Riemann-Hilbert correspondence}. In this paper, the questions that we hope to address are formulated in another framework, or more precisely over a different Riemann surface, namely the torus. In \cite{manin1996sixthpainleveequationuniversal}, Manin constructed a new version of the $P_{VI}$ equation by an elliptic change of coordinates seen as a morphism of bundles, one over the Riemann sphere, the other constructed over the torus. Manin's construction uncovered an unexpected link between Painelvé equations and the \textbf{elliptic Calogero-Moser (CM) system} both of which are integrable isospectral systems \cite{Krichever1981}. This construction has motivated further work on understanding isomonodromy systems over the genus $1$ surface \cite{Takasaki_1999}. It also led to the introduction of Lax pairs that complement this correspondence (see for instance \cite{Bordner_1998}).  \\

 The Hamiltonian built by Manin from the elliptic version contained a dependence on the deformation parameter which turned it into a non-autonomous Hamiltonian, this fact plays a crucial role in the extended symplectic $2-$form that one shall consider when studying the symplectic structure of the phase space. Manin considered then the affine Weyl group symmetries built by Okamoto \cite{Okamoto1986} for this special form of the $P_{VI}$ system. It was through this observation that the link between the Painlevé equations and (CM) system in its elliptic form (i.e. when the potential admits an elliptic form) was established. The two systems are linked immediately via the introduction of a new time parameter $t$ and the exchange $ 2 \pi i \frac{d}{d\tau} \rightarrow \frac{d}{d \tau}$. This provides the Hamiltonian system
\begin{equation}
    \begin{cases}
         \frac{d q}{d t} = & p 
         \\
     \frac{d p}{d t} = &  -   \frac{\partial H (q,p,t)}{\partial t}
    \end{cases}
\end{equation}
with the equation 
\begin{align}
 \frac{d^2 q}{d t^2} = \sum_{a=0}^3 \alpha_a \wp' (q+\omega_a,\tau)
\end{align}
Assigning to the parameters $(\alpha_a)_{0 \leq a \leq 3}$ all the same value of $-g^2/8$, the system is written, using the properties of the Weierstrass function as 
\begin{align}
    \frac{d^2 q}{d t^2}  = -g^2 \wp'(2q,\tau)
\end{align}
which is exactly the two body $(N=2)$ elliptic CM system, we shall detail this link throughout the article. However, note that the isomonodromic nature of the CM model does not depend only on the explicit change of deformation parameters that links the isospectral and isomonodromic approaches. Indeed, both sides admit huge similarities in the simple setting, nevertheless, as proven in \cite{Marchal_2024} in a study on the rational side, the link is non-trivial especially in a more abstract setting. Krichever \cite{Krichever1981} considered the $N-$body elliptic CM system and showed that this system admits an integrable structure encoded in a \textit{Lax pair representation} 
\begin{align}
    \frac{\partial L(z)}{\partial t}  = [L(z),M(z)],
\end{align}
with the Lax matrix and its auxiliary counter-part being matrix-valued functions of a spectral parameter $z$ defined on the torus. Based on the formulation of the Hitchin system \cite{hitchin1987stable}, a geometric construction of isomonodromic systems on a general Riemann surface was developed by Levin and Olshanetsky \cite{ALevin1997PainleveC} who stated that Manin's form may be seen as an isomonodromic system on the torus $T_\tau$. According to this construction, the Darboux coordinates $q_j$ are identified as the moduli of a $SU(n)$ flat bundle above the torus, with $L$ being a Higgs field defined on the bundle as well. \\

This article contributes to the theory of isomonodromic deformations by explaining the geometric construction of isomonodromy systems above the torus, and then showing that the elliptic CM model is an instance of the bigger geometric picture at least locally. The simple relation with the isospectral approach is discussed, as well as the traditional gauge for the CM Lax pair. The paper is organized as follows, in Section \ref{2}, we motivate the theory of isomonodromic deformations above the torus by considering the elliptic $P_{VI}$ equation, the symmetries of this equation are discussed and extended with an additional symmetry. We review the isomonodromy system above the torus in Section \ref{secisodef} and discuss a sufficient simpler case of one simple pole. In Section \ref{3}, we show how the elliptic CM model fits in this geometric construction and discuss an alternative gauge that preserves the double periodicity. We conclude the section by considering the extended symplectic $2-$form attached to the system. This article is self contained and a short review on elliptic functions and their properties is presented in Appendix \ref{B}.

\section{From rational to elliptic Painlevé $VI$ and symmetries} \label{2}
\subsection{Elliptic Painlevé $VI$}
The motivation behind considering isomonodromic deformations on the torus was revived by the work of Manin \cite{manin1996sixthpainleveequationuniversal}. In his article, Manin was inspired by the integral form of $P_{VI}$ evaluated by Fuchs, which in turn is an instance of a more general construction made in \cite{doi:10.1142/9789812830517_0002} to prove the Mordell functional conjecture. In his formalism, Manin proposed the following $P_{VI}$ version
\begin{align}
    (2 \pi i)^2 \frac{d^2 q}{d \tau^2} = \sum_{a=0}^3 \alpha_a \wp' (q+\omega_a,\tau)  
 \end{align}
where $\wp'(z,\tau)$ denotes the derivative in the first variable of the $\wp-$Weiertrass function given by 
\begin{align}
    \wp(z,\tau) = & \frac{1}{z^2} + \sum_{(m,n)\neq (0,0)} \left( \frac{1}{(z+ m+ n \tau)^2} - \frac{1}{(m+n\tau)^2} \right) 
\end{align}
 The torus being presented as the quotient $T_\tau = \mathbb{C}/ (\mathbb{Z} + \tau \mathbb{Z})$, the $(\omega_a)_{0 \leq a \leq 3}$ are assigned the values of the origin and the half periods 
 \begin{align} \label{halfp}
     \omega_0 = 0 && \omega_1 = \frac{1}{2}  && \omega_2 = \frac{1}{2} + \frac{\tau}{2}&& \omega_3 = \frac{\tau}{2}
 \end{align}

Finally, the parameters $(\alpha_a)_{0 \leq a \leq 3}$ map with the usual parameters $(\alpha,-\beta, \gamma,\frac{1}{2} - \delta)$ of the classical sixth Painlevé equation
\begin{align} \label{P6}
    \frac{d^2 y}{d t^2} = & \frac{1}{2} \left( \frac{1}{y} +\frac{1}{y-1} + \frac{1}{y- t}    \right) \left( \frac{dy}{d t} \right)^2  -  \left( \frac{1}{t} +\frac{1}{t-1} + \frac{1}{y- t}    \right) \frac{d y}{d t} \cr
    & + \frac{y (y-1) (y-t)}{t^2 (t-1)^2} \left( \alpha+\beta \frac{t}{y^2} + \gamma \frac{t-1}{(y-1)^2} + \delta \frac{t (t-1)}{(y-t)2} \right)
\end{align}
The elliptic simpler form is reached through a morphism transferring the coordinates from the rational world of the Riemann sphere to the elliptic world of the torus given by 
\begin{align} \label{change}
    \text{elliptic $P_{VI}$}: \, \, \, \, \, (q,\tau) \longleftrightarrow \, \, \, \, \, \left(y =  \frac{\wp (q,\tau) - \wp (\frac{1}{2},\tau)}{ \wp (\frac{\tau}{2},\tau) - \wp (\frac{1}{2},\tau) } , t  =\frac{\wp (\frac{\tau+1}{2},\tau) - \wp (\frac{1}{2},\tau)}{\wp (\frac{\tau}{2},\tau) - \wp (\frac{1}{2},\tau)}  \right) 
\end{align}
The Hamiltonian formulation of the Painlevé equations \cite{Malmquist1922} was extended smoothly to the elliptic case, in particular, the elliptic $P_{VI}$ was shown to be equivalent to the following Hamiltonian system 
\begin{equation}
    \begin{cases}
            2 \pi i \frac{d q}{d \tau} = &  \frac{\partial H (q,p,\tau)}{\partial p} 
            \\
     2 \pi i \frac{d p}{d \tau} = &  -   \frac{\partial H (q,p,\tau)}{\partial q}
    \end{cases}
\end{equation}
with the Hamiltonian 
\begin{align}
    H (q,p,\tau) =  \frac{p^2}{2} - \sum_{a=0}^3 \alpha_a \wp (q+\omega_a,\tau)  
\end{align} 
In addition, Manin transferred the symmetries studied by Okamoto applying them to his setup, let us review these symmetries and complement them by an additional symmetry that relies on the properties of the $\wp-$Weiertrass function which we call \textit{the extended symmetry} of the elliptic $P_{VI}$.

\subsection{Inherited symmetries of the elliptic $P_{VI}$}
The original sixth Painlevé equation holds symmetries described in terms of the affine Weyl group of type $D_4$ studied by Okamoto \cite{Okamoto1986}. Moreover, this equation admits another symmetry under the symmetric group $S_4$ of the affine Dynkin diagram of $D_4$ which is isomorphic to the $F_4$ affine Weyl group. In this section, we shall discuss the symmetries of the elliptic version which are governed by modular transformations extending the results of Manin \cite{manin1996sixthpainleveequationuniversal}. This leads to the conclusion that the lift of the variables from the rational world to the elliptic world $\mathbb{C} \times \mathbb{H}$ changes the symmetries of $P_{VI}$. \\

Take the torus $T_{\tau}$ and take $\omega_a$ given by (\ref{halfp}), the elliptic $P_{VI}$ equation is given by 
\begin{align} \label{EP6}
    (2 \pi i )^2 \frac{d^2 q}{d \tau^2} = \sum_{a=0}^3 \alpha_a \wp'(q + \omega_a,\tau)
\end{align}
The above equation admits the following set of symmetries
\begin{theorem} [Symmetries of the elliptic $P_{VI}$] The elliptic version of $P_{VI}$ admits the following set of symmetries
\begin{enumerate}
    \item \textit{$S_4$ symmetry;} manifested by the permutation of the parameters $\alpha_i$ and induces two types of transformations taking the form 
    \begin{align}
        (q,\tau) \longrightarrow & \left( \frac{q}{c q + \tau} , \frac{a \tau + b}{c \tau + d} \right) \cr
          (q,\tau) \longrightarrow & \left( q+ \omega_\alpha , \tau \right)
    \end{align}
    This symmetry is generated by the permutation of the sections. 
    \item \textit{Landin transform;} which is due to Landin's identity  
    \begin{align}
        \wp' \left(z,\frac{\tau}{2} \right) = \wp' \left(z, \tau \right) + \wp' \left(z+\frac{\tau}{2},\tau\right) 
    \end{align}
    which translates to the equation, when taking parameters $\left( \alpha_0,\alpha_1,\alpha_0,\alpha_1 \right)$, as
    \begin{align}
         \frac{d^2 q}{d \tau ^2} = & \alpha_0 \left( \wp'(q,\tau) +   \wp' (q+\frac{\tau}{2}, \tau) \right) + \alpha_1 \left(  \wp' (q+\frac{1}{2}, \tau) +  \wp' (q+\frac{\tau+1}{2}, \tau)  \right) \cr
          = & \frac{1}{4} \frac{d^2 q(\tau)}{ d(\tau/2)^2} = \alpha_0 \wp'(q,\frac{\tau}{2}) + \alpha_1 \wp'(q +\frac{1}{2}, \frac{\tau}{2})
    \end{align}
    In other words, this creates a bijection between solutions as the converse is also valid in the following way, if $q(\tau)$ is a solution with parameters $\left( \alpha_0,\alpha_1,\alpha_0,\alpha_1 \right)$, then so is $q(2 \tau)$ with parameters $\left( 4 \alpha_0 , 4 \alpha_1,0,0 \right)$, 
    
    \item \textit{Extended symmetry;} manifested by the change of the parameters which can be absorbed into a change of the solution and the periods of the lattice. In particular, taking $(q,\tau)$ as a solution with parameters $\left( \alpha_0,\alpha_1,\alpha_2,\alpha_3 \right)$, then so is $(j q , j \tau)$ with parameters $\left( \frac{\alpha_0}{j^3}, \frac{\alpha_1}{j^3},\frac{\alpha_2}{j^3},\frac{\alpha_3}{j^3} \right)$ for any non-zero complex number $j$. This creates a bijection between solutions since the converse is also true 
    \begin{align}
        (q,\tau, \alpha_i) \longleftrightarrow ( j  q, j \tau, \frac{\alpha_i}{j^3} )
    \end{align}
    One must keep in mind that in order to have a complete symmetry between these sets of solutions, not only $\tau$ is modified but also the other period of the lattice i.e. the half periods $\omega_a$ are modified to $j \omega_a$.
\end{enumerate}
\end{theorem}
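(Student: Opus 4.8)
The plan is to treat the three items uniformly: each is verified by substituting the transformed data into \eqref{EP6} and reducing to a single transformation law of the Weierstrass function, the bookkeeping in every case being the comparison of the factor produced on the left by the chain rule applied to $d^2 q/d\tau^2$ with the factor produced on the right by $\wp'$. For the $S_4$ symmetry I would start from the observation that $\{\omega_0,\omega_1,\omega_2,\omega_3\}$ are precisely the four two-torsion points of $T_\tau$, so any admissible transformation acts on the right-hand side of \eqref{EP6} by permuting the arguments $q+\omega_a$ modulo the lattice, dragging the $\alpha_a$ along with the permutation. A half-period translation $q\mapsto q+\omega_\alpha$ permutes the four two-torsion points by the group law of $(\mathbb{Z}/2)^2$; since $\wp'$ is lattice-periodic the right-hand side reappears with the $\alpha_a$ permuted, while the left-hand side is untouched because $d^2/d\tau^2$ annihilates the (at most linear in $\tau$) shift $\omega_\alpha$. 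For a modular generator $\gamma=\left(\begin{smallmatrix}a&b\\c&d\end{smallmatrix}\right)\in SL_2(\mathbb{Z})$ I would set $\tilde\tau=(a\tau+b)/(c\tau+d)$ and $\tilde q=q/(c\tau+d)$, and use the weight-three law $\wp'\!\left(z/(c\tau+d);\tilde\tau\right)=(c\tau+d)^3\,\wp'(z;\tau)$ together with the fact that reduction of $\gamma$ modulo $2$ permutes the three nonzero two-torsion points, i.e.\ acts through $SL_2(\mathbb{Z}/2)\cong S_3$. A direct chain-rule computation gives
\begin{equation}
\frac{d^2\tilde q}{d\tilde\tau^2}=(c\tau+d)^3\,\frac{d^2 q}{d\tau^2},
\end{equation}
so the two automorphy factors cancel and \eqref{EP6} is reproduced with the $\alpha_a$ permuted. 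Finally I would note that the translation subgroup $(\mathbb{Z}/2)^2$ together with the $S_3$ coming from $SL_2(\mathbb{Z}/2)$ generate $(\mathbb{Z}/2)^2\rtimes S_3\cong S_4$, the asserted symmetry.

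For the Landin transform the argument is purely algebraic followed by a rescaling. Specialising the coefficients so that the two half-periods in each pair differing by $\tau/2$ carry the same $\alpha$, I would apply the stated identity $\wp'(z,\tau/2)=\wp'(z,\tau)+\wp'(z+\tau/2,\tau)$ to collapse each pair into a single term at modulus $\tau/2$, turning the four-term right-hand side into $\alpha_0\wp'(q,\tau/2)+\alpha_1\wp'(q+\tfrac12,\tau/2)$. Setting $\hat q(\tau)=q(2\tau)$ and using $\hat q''=4\,\ddot q(2\tau)$ then returns a standard elliptic $P_{VI}$ at modulus $\tau$ with the two surviving half-periods $0,\tfrac12$ and parameters $(4\alpha_0,4\alpha_1,0,0)$; since every step is invertible this yields the claimed bijection of solution sets.

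For the extended symmetry the engine is the homogeneity of the Weierstrass function under a common dilation of the argument and the lattice, $\wp'(jz;j\Lambda)=j^{-3}\wp'(z;\Lambda)$, which is exactly why the half-periods must be scaled to $j\omega_a$ and the lattice to $j\Lambda$. Writing the transformed solution as $Q(T)=j\,q(T/j)$ with $T=j\tau$, the chain rule gives $d^2Q/dT^2=j^{-1}\ddot q(T/j)$; substituting into \eqref{EP6} and invoking the homogeneity law, the power of $j$ carried by the rescaled second derivative is matched against the $-3$ weight of $\wp'$, and reading off this balance fixes the required rescaling of the parameters $\alpha_a$. Invertibility of the dilation then makes the correspondence a bijection $(q,\tau,\alpha_i)\leftrightarrow(jq,j\tau,\cdot)$.

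I expect the genuine obstacle to be the $S_4$ case: one must verify that the left-hand side of \eqref{EP6} acquires exactly the weight-three automorphy factor $(c\tau+d)^3$ --- the cross terms in $d^2\tilde q/d\tilde\tau^2$ arising from differentiating $1/(c\tau+d)$ have to cancel, which uses $ad-bc=1$ --- and then identify the induced permutation of the $\alpha_a$ with the $SL_2(\mathbb{Z}/2)$-action on two-torsion and check that, together with the translations, the generated group is all of $S_4$. The Landin and homogeneity parts are comparatively mechanical once the correct chain-rule factors (the $4$ and the relevant power of $j$) are tracked.
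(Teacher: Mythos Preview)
Your proposal is correct and follows essentially the same strategy as the paper's proof. The paper is terser: it delegates the $S_4$ case entirely to Manin's article, labels the Landin case ``direct computation,'' and for the extended symmetry writes out the series for $\wp$ and $\wp'$ to obtain the homogeneity relations $\wp(z,\Lambda)=j^2\wp(jz,j\Lambda)$ and $\wp'(z,\Lambda)=j^3\wp'(jz,j\Lambda)$, then declares the conclusion straightforward. Your treatment of the $S_4$ part---verifying the weight-three chain-rule identity $d^2\tilde q/d\tilde\tau^{2}=(c\tau+d)^3\,d^2q/d\tau^2$ (which does indeed hold, the cross terms cancelling via $ad-bc=1$), matching it against the modular law for $\wp'$, and assembling $(\mathbb{Z}/2)^2\rtimes S_3\cong S_4$---supplies exactly the argument the paper outsources, so there is no genuine methodological difference.
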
 
    
\begin{proof}
Let us mention that due to the double periodicity of the elliptic Weiertrass function and its derivative, the elliptic $P_{VI}$ equation admits a symmetry under the transformation $q \longrightarrow q+m+n \tau$, since this transformation does not affect the derivative on the r.h.s. of the equation. We refer to \cite{manin1996sixthpainleveequationuniversal} for a proof of the inherited $S_4$ symmetry. The second set of symmetries follows from direct computation. The proof of the third set is less trivial, however, the idea may be presented as a property of the derivative of the $\wp$ function. Expressing the set of lattice points as $\Lambda$, one may write
\begin{align}
    \wp (z, \Lambda) = & \frac{1}{z^2} + \sum_{  \lambda \in \Lambda - \{ 0\}} \frac{1}{(z+\lambda)^2} - \frac{1}{\lambda^2} \cr
    = & \frac{1}{z^2} + \sum_{  \lambda \in \Lambda - \{ 0\}} \frac{z (2 + \frac{z}{\lambda}) }{(1+ \frac{z}{\lambda})^2} . \frac{1}{\lambda^3}
\end{align}
The derivative is also written as
\begin{align}
    \wp'(z,\lambda) = -2 \sum_{\lambda \in \Lambda} \frac{1}{(z+\lambda)^3}
\end{align}
It is easier to see that both functions satisfy the property
\begin{align}
     \wp (z, \Lambda) = j^2 \wp ( j z,  j \Lambda), \qquad
     \wp' (z, \Lambda) = j^3 \wp' ( j z,  j \Lambda)
\end{align}
This property is a part of a larger set of properties that describes the action of the modular group on these functions. It is straightforward to see that considering the elliptic $P_{VI}$ with $(j q , j \tau)$ induces the required action on the parameters. 
\end{proof}

 \begin{remark}
Hitchin's equation may be obtained, through a combination of the first two sets of symmetries, in a reduced form given by 
\begin{align}
    \frac{d^2 q}{d \tau^2} = \frac{1}{2 \pi^2} \wp'(z, \tau)
\end{align}
Furthermore, the $S_4$ and the extended symmetry translate into a symmetry for the elliptic CM system as one may fix all the parameters to a fixed value, this is not always the case for the symmetry of the Landin's transform.
 \end{remark}

\section{Isomonodromy systems on the torus} \label{secisodef}

It is well known that the Painlevé equations govern the isomonodromic deformations of rational connections (see for instance Chap. $3$ of \cite{FromGaussToPainleve} for a review on the subject). Does this statement admit an extension to the elliptic setting and a construction of isomonodromy systems over the torus? This question was partially answered in \cite{Takasaki_1999} for the simpler case, the answer relied on the isospectral method. This setup however does not extend to more general settings, the general isomonodromy problem was considered for regular singularities in \cite{Levin}. A general formulation for bundles of higher rank was given by Krichever in \cite{Krichever2}. We shall review the geometric picture and show that the elliptic CM is an instance of the general geometry of isomonodromic deformations.  

\subsection{Geometric construction}
Isomonodromic deformation systems are built over a compact Riemann surface $\Sigma$ with a given number of marked points (or simply poles) $(a_k)_{1\leq k\leq s}$, and with a meromorphic connection, defined over the punctured Riemann surface, admitting a given irregular type (i.e. a given local fundamental form or Birkhoff factorization locally diagonalizing the singular part of the connection) at each pole. When $\Sigma$ has non-zero genus $g>0$, the set of poles and associated irregular types is complemented by a set of matrices encoding the behavior of the connection on a basis of homology cycles $(\mathcal{A}_i,\mathcal{B}_i)_{1\leq i\leq g}$. This extended set provides a characterization of the isomorphism classes of connections at stake. Let us adapt this general situation to the torus $T_\tau$ seen as the quotient $\mathbb{C}/ (\mathbb{Z}+ \tau \mathbb{Z})$. Here $\tau\in \mathbb{H}$ is the modular parameter of the torus restricted to the upper half plane $\mathbb{H}$. The starting point is the differential system satisfied by the horizontal sections 
\begin{align}
    \partial_z \Psi(z) = L(z) \Psi(z)
\end{align}
where $z$ is a local coordinate on the torus and $\Psi(z)$ is seen as a horizontal section from the torus a non-trivial rank-$n$ vector bundle defined over a local neighborhood of $z$.\\  

For simplicity and because the elliptic $P_{VI}$ case and the elliptic CM system belong to this subclass, we shall restrict ourselves to the case of \textit{untwisted connections $L(z)$} defined over the genus $1$ surface that are characterized by the following condition: For any pole $a \in T_{\tau}$ of order $r_a$, there exists a local holomorphic gauge transformation $G_a(z):=G_{a,0}+ G_{a,1}(z-a)+\underset{k=2}{\overset{+\infty}{\sum}}G_{a,k}(z-a)^k$ seen as a formal bundle automorphism that acts on the system 
\begin{align}
    \partial_z \Psi (z) = L(z) \Psi (z) 
\end{align}
by the gauge transformation $\Psi_a(z) := G_a(z) \Psi(z)$, and produces a local diagonalization of the singular part of the connection at $z=a$:
 \small{\beqq \Psi_a(z)\overset{z\to a}{=}\Psi_a^{(\text{reg})}(z)\exp\left(\sum_{k=1}^{r_a}\text{diag}\left(t_{a^{(1)},k},\dots, t_{a^{(n)},k}\right)(z-a)^{-k} + \text{diag}\left(t_{a^{(1)},0},\dots, t_{a^{(n)},0}\right)\ln(z-a)  \right)\eeqq} 
  \normalsize{}
where $\Psi_a^{(\text{reg})}(z)$ is locally holomorphic around $z=a$. This local fundamental decomposition may also be interpreted in terms of the connection as
\beq \label{DiagoCondition}
L_a(z):= G_a \, L (z) dz\, G_a^{-1} + (\partial_z G_a)G_a^{-1}dz = dQ_a(z_a) + \Lambda_a \frac{dz_a}{z_a} + O\left(dz_a\right)
 \,\,\text{where}\,\, Q_a(z_a) = \sum_{k=0}^{r_a-1} \frac{Q_{a,k}}{z_a^{\, k}}
\eeq 
where $z_a:=z-a$ is the local coordinate, $Q_a(z_a)$ and $\Lambda_a$ are diagonal matrices that are respectively called the \textit{irregular type} of $L(z)$ at $a$ and its \textit{residue (also called exponent of formal monodromy)} \cite{Birkhoff}. Note that the location of poles, their irregular type and monodromies are local data at each pole that do not take into account the genus of the underlying Riemann surface $\Sigma$. For connections defined on the Riemann sphere, these data would provide all the parameters for deformations. On the contrary, when considering isomonodromic deformations on the torus, one must also take into account the analytic continuation on the punctured torus of the horizontal sections around the cycles $ \mathcal{A} (z \rightarrow z+1)$ and $\mathcal{B} (z \rightarrow z+\tau)$. In addition to the cycles $\mathcal{C}_{a_1},...,\mathcal{C}_{a_s}$ associated around each pole, this provides additional \textit{monodromy matrices} $M_1$, $M_\tau$, $M_{a_1},\dots,M_{a_s}$ corresponding to
\begin{align} \label{periodicity}
\Psi(z+1)= & \Psi(z) M_1 \cr
\Psi(z+\tau)= & \Psi(z) M_\tau \cr
\Psi(ze^{2i\pi}  -a_k)= & \Psi(z) M_{a_k}
\end{align}  
and satisfying the relation
\begin{align}
    M_1M_\tau M_1^{-1}M_\tau^{-1}=\prod_{k=1}^s M_{a_k}
\end{align} 
By setting some additional conditions on the eigenvalues of the residue matrix, the monodromy around the poles can be set to
\begin{align}
   M_{a_k} \sim e^{2 \pi i \Lambda_a}    
\end{align}
where the equivalence indicates the conjugacy class of the monodromy matrix. By the conjugation action, one may always normalize the horizontal sections diagonalizing the monodromy matrices around the $\mathcal{A}$ cycle. \\

The monodromy matrices and the local fundamental decompositions fix the space of connections that we are considering. Moreover, they provide the full set of deformation parameters (complemented by the module of the torus $\tau$) and the set of monodromies that are preserved by these deformations. Therefore, let us define first the set of connections having the local formal decomposition as above over the punctured torus

\begin{align}
  F_{\mathcal{R},\mathbf{r}} := \left\{ L(z) \in \mathfrak{gl}_n(\mathbb{C}  \,\,   | \,\, L(z) \,\, \text{has irregular type} \,\,  Q_a(z_a) \,\, \text{and local monodromies} \,\, \Lambda_a   \right\} / GL_n(\mathbb{C})
\end{align}
This space does not define a moduli space since it does not take into account the additional monodromy matrices associated to the cycles. By taking into account the additional monodromy related to the homology cycles of the torus, we define the \textit{moduli space of connections}.

\begin{definition} \label{moduli}
    Let us define the moduli space of connections considered 
    \begin{align}
     \hat{\mathcal{M}}_{\tau,\mathcal{R},Q_\mathbf{a},M_\mathbf{a},M_1,M_\tau}  := & \{ L(z) \in F_{\mathcal{R},\mathbf{r}}  \,\,\,  | \,\,\, L(z) \,\, \text{has a global monodromy} \,\, M_1,M_\tau \cr 
     & \text{around the homology cycles of the torus}  \} / GL_n(\mathbb{C}) 
 \end{align}
 where the set $\mathcal{R}$ is the set of poles considered and the equivalence relation indicates the isomorphism classes i.e. two connections are isomorphic when they are related by a gauge transform (one of them lives in the orbit of the other). This space may be viewed as the symplectic reduction of the moduli space of flat $GL_n(\mathbb{C})-$connections defined on the punctured torus \cite{Alekseev1995}, and it has a dimension
 \begin{align}
     dim \,  \hat{\mathcal{M}}_{\tau,\mathcal{R},Q_\mathbf{a},M_\mathbf{a},M_1,M_\tau} = s n^2  + s n+s +1
 \end{align}
 Furthermore, the monodromy matrices define the character variety of the surface 
\begin{align}
    \mathcal{M}_{1,s} = \left\{ (M_1,M_\tau,M_{a_1},...,M_{a_s}) \in GL_n(\mathbb{C}) \,\,\,\,\,\, \text{with} \,\,\,\,\, M_1^{-1} M_\tau^{-1} M_1 M_\tau = \prod_{k=1}^s M_k  \right\} / \sim
\end{align}
The character variety is a space of dimension
\begin{align}
    dim  \,  \mathcal{M}_{1, s} = n(s n +1)  
\end{align}
\end{definition}
We shall now focus on a particular explicit case of this construction fixing only one simple pole at the origin, this case is actually the one which governs the elliptic CM model as we shall see. The compatibility of the system defined with the Lax pair provides the evolutions of the defined Darboux coordinates. Let us be more explicit on this matter in the next section when we consider the simpler case.

 \subsection{The case of one pole}
 Let us consider an instance of the general definition and fix the torus as above with one marked pole of order $r_0=1$ located at the origin of the torus. Since one may use the translations on the torus that do not alter the symplectic structure (the location of the pole would provide a "trivial time" in the sense of \cite{marchal2024hamiltonianrepresentationisomonodromicdeformations}), we may choose to set the pole at $z=0$. In a local neighborhood of the pole, one may define the following space of connections admitting a local behavior at the pole 
\begin{definition}[Space of local connections] Let us define the following set of connections over the torus admitting a simple pole at $z=0$ of order $r_0=1$
\begin{align} \label{def}
    F_{0,1} := \{ L(z):= L^{[0,1]}z^{-1} +  L^{[0,0]} + O(z) \,\,\,\,\, | \, \, \, L^{[0,1]}, L^{[0,0]} \in \mathfrak{gl}_n(\mathbb{C})   \}/ GL_n (\mathbb{C})
\end{align}
where the quotient indicates the action of the reductive complex group, and the orders of the connection are seen as matrices defined on a rank-$n$ non-trivial $GL_n (\mathbb{C})-$bundle.
\end{definition}
\begin{remark} \label{sl}
   One may consider connections with a vanishing trace on the $SL_n(\mathbb{C})-$bundle, we choose to keep the trace at this level. This does not cause any difference when making the link with the elliptic CM model later. 
\end{remark}

By the geometric construction, one may take a representative of a respective orbit without changing the underlying structure and consider this subset of connections with an irregular type defined in the neighborhood of the pole whose entries define the deformation parameters. Note that the set of deformation parameters is complemented by the module of the torus $\tau$. Let us also define as above the horizontal sections solutions of the following differential system
\begin{align} \label{isodef}
    \partial_z \Psi(z) =  L(z) \Psi(z) \qquad
    2 \pi i \partial_\mathbf{t} \Psi(z)  = - A(z) \Psi(z) 
\end{align}
where the additional $A(z)$ matrix is the associated auxiliary matrix completing the Lax pair $(L(z),A(z))$ and $\mathbf{t}$ is the set of deformation parameters. In this particular gauge, the matrix does not have only a simple pole, but rather a set of additional apparent singularities as we will show after introducing the Lax pair of the elliptic CM system. This is due to the Riemann-Roch theorem which forces the introduction of some additional twists when considering only one pole. We shall show the explicit bundle automorphism (or gauge transformation) that links both sides of the problem.   \\
When considering isomonodromic deformations on the torus as discussed, one has the additional cycles $\mathcal{A} (z \rightarrow z+1)$ and $\mathcal{B} (z \rightarrow z+\tau)$. In this specific case, considering a singular marked point located at $z=0$ (to which we assign the cycle $\mathcal{C}_0$), we end up with three monodromy matrices $(M_1,M_\tau,M_0)$. The first equation of (\ref{isodef}) is seen as an ordinary differential equation on the torus that encodes the change of the section in terms of the connection, while the second one ensures the independence of the monodromy matrices of the deformation parameters as is proven in proposition \ref{indepdence}. The periodicity of the horizontal sections in (\ref{periodicity}) provides a Lax pair that is also periodic with respect to the Lattice periods
\begin{align}
    L(z+1) = L(z) \qquad L(z+ \tau) = L(z) \cr
     A(z+1) = A(z) \qquad A(z+ \tau) = A(z)
\end{align}
However, as we shall see, the standard Lax pair of the elliptic CM system is expressed in a different gauge or equivalently on a different bundle linked to the original one via a bundle automorphism. To match the gauge of the system, we shall perform the following gauge transformation
\begin{align} \label{Gx}
    \td{\Psi}(z) := &  G_{l}(z)  \Psi(z) \cr
    := & diag (x(q_1,z),..., x(q_n,z)) \Psi(z)
\end{align}
where the functions $x(u,z)$ are the Lamé functions defined in (\ref{choice}), having properties explained in \ref{PProperties}, of half of the Darboux coordinates $(q_i)_{1 \leq i \leq n}$. The other half of the Darboux coordinates is defined through the entries of the Lax matrix as
\begin{align}
    p_j := \td{L}[j,j]  
\end{align}
and they also satisfy $p_j= 2 \pi i \partial_{\mathbf{t}} q_j$. The quasi-periodic properties of the Lamé functions induces a quasi-periodic behavior for the sections around the cycles and the simple pole. In terms of the monodromy matrices $(M_1,M_\tau,M_0)$, the sections have the following monodromy
\begin{equation}
   \td{ \Psi}(ze^{2 \pi i}) = \td{\Psi}(z) M_0, \qquad \td{\Psi} (z+1) = \td{\Psi} (z) M_1, \qquad \td{\Psi}(z+\tau) = e^{2 \pi i Q} \td{\Psi}(z) M_\tau
\end{equation}
This monodromy translates through (\ref{isodef}) to the Lax pair, we have
\begin{align} \label{action}
    \td{L}(z+1) = & \td{L}(z), \, \, \, \td{A}(z+1) = \td{A}(z) \cr
     \td{L}(z+\tau) = & e^{2 \pi i Q} \td{L}(z) e^{-2 \pi i Q} \cr
     \td{A} (z+ \tau ) = & e^{2 \pi i Q} (  \td{A}(z)+ 2 \pi i \td{L}(z) )  e^{-2 \pi i Q} - 2 \pi i P
\end{align}
where $Q := diag(q_1,...,q_n)$ and $P := 2 \pi i \partial_\mathbf{t} Q$. The above construction ensures the following proposition
\begin{proposition} \label{indepdence}
The monodromy matrices hold no dependence on the parameters $\mathbf{t}$, in other words, we have
\begin{align}
    \frac{d M_0}{d \mathbf{t}} = \frac{d M_1}{d \mathbf{t}} = \frac{d M_\tau}{d \mathbf{t}} = 0
\end{align}
\end{proposition}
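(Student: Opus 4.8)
The plan is to exploit the defining role of the second equation of \eqref{isodef}: the auxiliary matrix $A(z)$ (resp.\ $\tilde{A}(z)$) is engineered precisely so that the deformation flow transports the fundamental solution without disturbing its monodromy. Each monodromy matrix is by construction independent of $z$ (it records the jump of $\tilde{\Psi}$ along a generator of the fundamental group of the punctured torus), so it suffices to differentiate each quasi-periodicity relation with respect to the deformation time $\mathbf{t}$, substitute the deformation equation $2\pi i\,\partial_{\mathbf{t}}\tilde{\Psi}=-\tilde{A}\tilde{\Psi}$, insert the transformation laws \eqref{action}, and finally use that $\tilde{\Psi}(z)$ is invertible. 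I would record at the outset that a simultaneous fundamental solution $\tilde{\Psi}$ of both equations of \eqref{isodef} exists thanks to the compatibility (zero-curvature) condition $2\pi i\,\partial_{\mathbf{t}}\tilde{L}=-\partial_z\tilde{A}+[\tilde{L},\tilde{A}]$, obtained by equating the mixed derivatives $\partial_{\mathbf{t}}\partial_z\tilde{\Psi}=\partial_z\partial_{\mathbf{t}}\tilde{\Psi}$.

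I would first dispatch the monodromies $M_1$ and $M_0$ attached to the $\mathcal{A}$-cycle and to the loop $\mathcal{C}_0$ around the pole. These loops are not dragged by the deformation, so the variables $z$ and $\mathbf{t}$ decouple. Differentiating $\tilde{\Psi}(z+1)=\tilde{\Psi}(z)M_1$ in $\mathbf{t}$, substituting the deformation equation on both sides, and invoking the periodicity $\tilde{A}(z+1)=\tilde{A}(z)$ from \eqref{action}, the two terms containing $\tilde{A}$ cancel and one is left with $\tilde{\Psi}(z)\,\partial_{\mathbf{t}}M_1=0$; invertibility of $\tilde{\Psi}(z)$ gives $\partial_{\mathbf{t}}M_1=0$. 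The identical computation applied to $\tilde{\Psi}(ze^{2\pi i})=\tilde{\Psi}(z)M_0$, using that $\tilde{A}$ is single-valued around the puncture, yields $\partial_{\mathbf{t}}M_0=0$.

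The essential difficulty is the $\mathcal{B}$-cycle monodromy $M_\tau$, because the modular parameter $\tau$ is itself (up to the $2\pi i$ rescaling) the deformation time, so the cycle $z\mapsto z+\tau$ moves with the flow. Differentiating $\tilde{\Psi}(z+\tau)=e^{2\pi i Q}\tilde{\Psi}(z)M_\tau$ with respect to $\tau$ therefore requires the chain rule on the left-hand side: the shift of the argument contributes $\partial_z\tilde{\Psi}(z+\tau)=\tilde{L}(z+\tau)\tilde{\Psi}(z+\tau)$ in addition to the explicit deformation derivative, while on the right-hand side the prefactor contributes $\partial_{\mathbf{t}}e^{2\pi i Q}=P\,e^{2\pi i Q}$, using $P=2\pi i\,\partial_{\mathbf{t}}Q$ and the fact that $Q$ is diagonal. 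I would then substitute the transformation laws \eqref{action} for $\tilde{L}(z+\tau)$ and $\tilde{A}(z+\tau)$ and watch the cancellations: the term $2\pi i\,\tilde{L}(z)$ sitting inside the law for $\tilde{A}(z+\tau)$ is exactly what annihilates the chain-rule contribution $\tilde{L}(z+\tau)\tilde{\Psi}(z+\tau)$, and the term $-2\pi i P$ cancels the prefactor derivative $P\,e^{2\pi i Q}$. After these cancellations only $e^{2\pi i Q}\tilde{\Psi}(z)\,\partial_{\mathbf{t}}M_\tau=0$ remains, whence $\partial_{\mathbf{t}}M_\tau=0$.

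The main obstacle is thus not any single estimate but the careful bookkeeping of the $\tau$-coupling in the $M_\tau$ computation: one must verify that the precise coefficients $2\pi i\,\tilde{L}$ and $-2\pi i P$ in \eqref{action} — which reflect the quasi-periodicity of the Lamé gauge $G_l$ of \eqref{Gx} together with $\tau$ playing the role of deformation time — conspire to kill every term except the one proportional to $\partial_{\mathbf{t}}M_\tau$. As a consistency check I would rerun the three computations in the untwisted gauge of \eqref{periodicity}, where $L$ and $A$ are genuinely doubly periodic; there the prefactor $e^{2\pi i Q}$ and the $\tilde{L}$-correction are both absent and all three cancellations collapse to the textbook isomonodromy argument, confirming that $\partial_{\mathbf{t}}M_0=\partial_{\mathbf{t}}M_1=\partial_{\mathbf{t}}M_\tau=0$.
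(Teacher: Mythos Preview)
Your proposal is correct and follows essentially the same route as the paper: differentiate the quasi-periodicity relations of $\tilde{\Psi}$, substitute the deformation equation $\tilde{A}=-2\pi i\,\partial_{\mathbf{t}}\tilde{\Psi}\,\tilde{\Psi}^{-1}$, and compare with the prescribed transformation laws \eqref{action} so that the terms proportional to $\partial_{\mathbf{t}}M_\bullet$ are forced to vanish. Your write-up is more explicit than the paper's about the chain-rule contribution in the $\mathcal{B}$-cycle case (the $\partial_z\tilde{\Psi}(z+\tau)$ piece coming from $\tau$ moving the endpoint), and you spell out precisely which terms of \eqref{action} absorb which contributions, but the underlying computation is the same.
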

\begin{proof}
    Starting from the second part of (\ref{isodef}), one may write
    \begin{align}
        \td{A} (z) =  - 2 \pi i \partial_\mathbf{t} \td{\Psi}(z) \td{\Psi} (z)^{-1} 
    \end{align}
    Let us change the local coordinate and use the monodromy structure of the sections to get the following set of equations 
    \begin{align}
       \td{ A}(z e^{2 \pi i }) = & \td{A} (z)  - 2 \pi i \td{\Psi}(z) \frac{d M_0}{d \mathbf{t}} M_0^{-1}  \td{\Psi} (z)^{-1} \cr
        \td{A} (z+1) = & \td{A}(z)  -  2 \pi i \td{\Psi}(z)  \frac{d M_1}{d \mathbf{t}} M_1^{-1}  \td{\Psi} (z)^{-1} \cr
        \td{A}(z+\tau) = & e^{2 \pi i Q} (  \td{A}(z) +2 \pi i \td{L}(z) )  e^{-2 \pi i Q} - 2 \pi i P   - 2 \pi i \td{\Psi}(z)  \frac{d M_\tau}{d \mathbf{t}} M_\tau^{-1}  \td{\Psi} (z)^{-1} \cr
    \end{align}
    where one uses the relation $2 \pi i \partial_{\mathbf{t}} Q = P$. One sees immediately that the independence of the monodromy matrices on the deformation parameters $\mathbf{t}$ is forced. This result is easily generalisable to the case with several poles.
\end{proof}
The locally defined space $F_{0,1}$ of connections does not hold any information concerning the cycles of the torus and how may the connection transform under these cycles. However, the monodromy matrices above are governed by the character variety of the surface defined in this specific case as follows

\begin{definition} \label{CV}
The space of connections defined is described by the following space 
\begin{align}
     \td{\mathcal{M}}_{\tau ,\{ 0 \},Q_0,M_0,M_1,M_\tau} = & \{ L(z) \in F_{0,1}  \,\,\,  | \,\,\, L(z) \,\, \text{has a global monodromy} \,\, M_1,M_\tau   \cr 
     & \text{around the homology cycles of the torus} \} / GL_n(\mathbb{C})
\end{align}
where the equivalence is understood in terms of the gauge considered under the conjugation action of the reductive complex Lie group, this definition is an instance of the general definition made in def. \ref{moduli}.  The dimension is given by
\begin{align}
    dim \, \td{\mathcal{M}}_{1,1} = n^2+n+1
\end{align}
The character variety of these $GL_n(\mathbb{C})$ connections is defined by the monodromy matrices modulo the conjugation action of the group which identifies the monodromy matrices related by a gauge transformation. This variety takes the form of a cubic surface defined by 
\begin{align}
    \mathcal{M}_{1,1} = \{ M_1, M_\tau, M_0 \in GL_n(\mathbb{C}); \, \, \, \, \, \, \, \,  M_1^{-1}M_\tau^{-1} M_1 M_\tau =M_0 \} / \sim
\end{align}
It is a symplectic variety with a monodromy data satisfying the Goldman bracket. 
\end{definition}

From (\ref{isodef}), one writes the zero curvature equation associated to the system and satisfied by the Lax pair
\begin{align}
      2 \pi i \partial_\mathbf{t} \td{L}(z) + \partial_z \td{A}(z) = [\td{L}(z),\td{A}(z)]
\end{align}
which shall provide the evolution of the Darboux coordinates and their respective Hamiltonian. We shall see in the next section that the elliptic CM model Lax pair indeed fits in the above description, making it an isomonodromic system.

\begin{remark}
    One huge difference between the rational and the elliptic worlds is the choice of the Darboux coordinates, the $P_{VI}$ case corresponds to a case with a genus $g=1$, when considering isomonodromic deformations on the sphere, the symplectic fibre in \cite{marchal2024hamiltonianrepresentationisomonodromicdeformations} becomes of dimension $2$ and thus contains only one set of coordinates and one Hamiltonian describing the system. The situation in the elliptic setting is different, one ends up with $n$ pairs of Darboux coordinates due to the natural choice of the cycles of the torus, and still manages to encode the structure in one Hamiltonian corresponding to the genus of the torus, we shall see this in details when making the identification with the elliptic CM model.   
\end{remark}

\section{The elliptic CM model as an isomondromic system} \label{3}
\subsection{The Lax pair}
Let us define the Lax pair taken from the theory of Hamiltonian reduction of the CM system in its elliptic form
\begin{definition}
Let us define the reduced Lax pair as 
\begin{align} \label{Laxpair}
    \td{L}(z) = & P + i g \sum_{j \neq k} x(q_j-q_k , z) E_{jk} \cr
    \td{A} (z) = & D + i g \sum_{j \neq k} y(q_j-q_k , z) E_{jk}
\end{align}
where 
\begin{align}
    D = & i g \, \text{diag} \left(\sum_{k \neq 1} \wp (q_1-q_k,\tau),..., \sum_{k \neq n} \wp (q_n-q_k,\tau) \right)
\end{align}
the set $(q_j,p_j)_{1 \leq j \leq n}$ are the Darboux coordinates attached to our system, and $E_{jk}$ is the unit matrix with $1$ in the $[j,k]$ entry. This Lax pair has been studied from many perspectives and is obtained through the theory of Hamiltonian reduction (see for instance \cite{etingof2009lecturescalogeromosersystems}).     \\
The function $y(u,z)$ is defined as the $u-$derivative of the function $x(u,z)$, both of them must satisfy the following set of quasi-periodic equations
\begin{align} \label{period}
    x(u,z) y(v,z)- y(u,z) x(v,z) =& x(u+v,z) (\wp (u,\tau) - \wp(v,\tau)) \cr
    x(u,z) y(-u,z) - y(u,z) x(-u,z) =&  \wp'(u,\tau) \cr
     x(u,z) x(-u,z)  = & \wp (z,\tau) - \wp(u,\tau)
\end{align}
This set of equations is essential in order for the Lax pair to satisfy the Lax equation \cite{Krichever1981}. 
\end{definition}

\begin{remark}
    One may refer to the defined Lax pair as a reduced Lax pair since one may consider a more general version of this pair and study the symplectic structure of the Poisson space before the reduction. This is the standard terminology following the study of moduli spaces of meromorphic connections defined over Riemann surfaces. However, this remains beyond the scope of this article and we leave this aspect of our studies for future investigations.
\end{remark}
\begin{remark}
    As discussed in remark \ref{sl}, one may consider the isomonodromic deformations of connections defined on the $SL_n(\mathbb{C})-$bundle over the torus by setting the condition $\sum_j p_j = 0$, this choice is manifested by taking the center of the mass frame as the origin of the torus and is seen through the connection by vanishing the trace of its leading order.  
\end{remark}

\subsection{As an isomonodromy system }
We shall in this section show that the Lax pair of the elliptic $\mathfrak{su}(n)$ CM system fits in the geometric construction of isomonnodromy systems of meromorphic connections above the torus. However, note that this system is an instance of a more general family of Hamiltonian systems defined for any simple Lie algebra. Given a Lie algebra $\mathfrak{g}$, Olshanetsky and Perelomov \cite{Olshanetsky1976} considered the Hamiltonian 
\begin{align} \label{Hamgen}
    H(q,p) = \sum_{j=1}^n \frac{p^2_i}{2} - \sum_{\alpha \in \mathcal{R} (\mathfrak{g})} m^2_{ | \alpha | } \wp (\alpha . q) 
\end{align}
where $n$ is the rank of the associated Lie algebra and $\mathcal{R} (\mathfrak{g})$ denotes its set of roots, $m_{ | \alpha | }$ are mass parameters dependent only on the length of the root $\alpha$ to preserve the invariance of the Hamiltonian system under the action of the Weyl group. \\

Before solving the zero curvature equation of the associated Lax pair defined in the previous section, one needs to make a choice of the functions $x(u,z)$ and $y(u,z)$, several choices exist in the literature since the conditions set on both of these functions leave a range of possibilities and the Hamiltonian structure is independent of this choice. We shall make the following choice (\cite{Takasaki_1999})
\begin{align} \label{choice}
 x(u,z) = \frac{\theta_1 (z-u) \theta'_1 (0)}{\theta_1(z) \theta_1 (u)}   
\end{align}
where the function $\theta_1(u)$ is a Jacobi elliptic theta function defined by
\begin{align}
    \theta_1 (u) = - \sum_{n=-\infty} ^{\infty}  e^{ \left( \pi i \tau \left( n+\frac{1}{2} \right)^2 +2 \pi i \left( n+\frac{1}{2} \right) \left( u +\frac{1}{2} \right)  \right)} 
\end{align}
This choice fixes $y(u,z)$
\begin{align}
    y(u,z) = - x(u,z) (\rho(u)+ \rho(z-u) )
\end{align}
where $\rho(u)$ is the logarithmic derivative of the theta function. These functions enjoy a lot of properties (discussed in Appendix \ref{PProperties}), and its due to these properties that our system fits as an isomnodromy system. Indeed, take the Lax pair defined in (\ref{Laxpair}), from the properties of the choice taken one may write
\begin{align}
    \td{L}(z) &\overset{z\to 0}{=} - i g \sum_{j\neq k} E_{jk} \left( \frac{1}{z}  + \rho(q_j-q_k) + O(z)\right) + P \cr
    &\overset{z\to 0}{=} \td{L}^{[0,1]}z^{-1} + \td{L}^{[0,0]} 
\end{align}
which has a leading polar part  
\begin{align}
    \td{L}^{[0,1]} = - i g \sum_{j\neq k} E_{jk} \frac{1}{z}
\end{align}
This leading order is always diagonalizable, nevertheless, admits eigenvalues with a higher multiplicity which indicates that we are in a regular simple resonant case. To prove that, one solves the eigenvalue problem of the matrix and gets the gauge matrix whose columns are the respective eigenvectors. The gauge matrix is given by 
\begin{align}
    J:= & \begin{pmatrix}
        -1 & -1 & \hdots & -1 & 1 \\
        0 &   0 & \hdots & 1 & 1  \\
        \vdots &   \vdots  &  \mathbf{1}   & \vdots  & \vdots \\
        0 & 1 & \hdots & 0 & 1 \\
        1 & 0 & \hdots & 0 & 1 
    \end{pmatrix}
\end{align}
and the eigenvalues are given in the following column vector
\begin{align}
    V:= \begin{pmatrix}
        -1 \\
        \vdots \\
        -1 \\
        n-1
    \end{pmatrix}
\end{align}
Indeed, one may already deduce the existence of the diagonal form from the set of eigenvalues. This construction fits smoothly into the geometric picture, and we refer to the element of the irregular type as the isomonodromic time, in our setting it is $-ig$. Thus, isomonodromic deformations are performed keeping the parameter $g$ fixed. Furthermore, as we shall see, the deformation parameters are reduced to the module of the torus $\mathbf{t} = \tau$. \\

As explained in the previous section, the gauge in which the Lax pair $(\td{L}(z),\td{A}(z))$ is only quasi-periodic satisfying (\ref{action}). Let us use the gauge action defined in (\ref{Gx}) in terms of the Lamé functions to get a periodic Lax pair. This gauge is obtained by 
\begin{align}
    \td{L}(x) = G_{l}(z) L(z) G_{l}^{-1}(z) + G_{l}'(z) G_{l}^{-1}(z)
\end{align}
Acting by the inverse matrix provides a Lax matrix given by
\begin{align}
    L(z) = & G_{l}^{-1}(z) \td{L}(z) G_{l}(z) - G_{l}^{-1}(z) G_{l}'(z) \cr
    = & P + ig \sum_{j \neq k} \left( \frac{x(q_j-q_k , z) x(q_k,z) }{x(q_j,z)} \right) E_{jk} - \sum_j \left(  \frac{\partial_z x(q_j , z)}{x(q_j , z)} \right)E_{jj}
\end{align}
The Lax matrix preserves its periodicity around the $\mathcal{A}$ cycle. The quasi-periodic behavior of the function $x(u,z)$ is clearly preserved by the derivative, thus, the first and last terms are indeed periodic. As for the second term, we have
\begin{align}
    \left( \frac{x(q_j-q_k , z+ \tau) x(q_k,z+\tau) }{x(q_j,z+\tau)} \right) E_{jk} = & \left(  \frac{ e^{2 \pi i (q_j-q_k)} x(q_j-q_k , z)  e^{2 \pi i q_k} x(q_k,z) }{e^{2 \pi i q_j} x(q_j,z)} \right) E_{jk}
 \end{align}
which is indeed periodic. The auxiliary matrix in this gauge is given by
\begin{align}
    A(z) = & G_{l}^{-1}(z) \td{A}(z) G_{l}(z) - 2 \pi i G_{l}^{-1}(z) \partial_\tau G_{l}(z) \cr
    = & D + ig \sum_{ j \neq k} \left( \frac{y(q_j-q_k , z) x(q_k,z) }{x(q_j,z)} \right) E_{jk} - 2 \pi i \sum_j  \left( \frac{\partial_\tau x(q_j,z) }{x(q_j,z)}  \right) E_{jj}
\end{align}
which gives the periodicity of the auxiliary matrix. Note that despite having a periodic Lax matrix, the polar structure becomes un-natural due to the meromorphic property of the Lamé functions. Both matrices $L(z)$ and $\td{L}(z)$ admit a local description that fits the geometric construction. However, the price to pay when requiring a periodic behavior (thus compensating the twists in (\ref{action})) is the introduction of an additional set of singularities. The local behavior of the Lax matrix remains independent of both gauges, this is due to the commutation of diagonal matrices. 

\begin{remark}
    Another choice widely used in the literature is in terms of the Weierstrass sigma function $\sigma(u  | 1,\tau)$ with primitive periods $1$ and $\tau$ 
    \begin{align}
        x(u,z) = \frac{\sigma(z-u)}{\sigma(z) \sigma(u)}
    \end{align}
    Note that this choice is irrelevant when it comes to the Hamiltonian structure of the system. The two choices are actually related due to the well-known relation between the elliptic Weierstrass $\sigma$ function and the Jacobi theta functions $\theta_1$ in terms of the periods of the lattice. 
\end{remark}
Now that one has the explicit expressions of the Lax matrices, one solves the zero curvature equation in order to obtain the evolution of the Darboux coordinates. the symplectic structure holds no dependence on the gauge considered, hence both gauges provide the same evolution. From these evolutions, one obtains the Hamiltonian of the system, we have 
\begin{theorem}
The zero curvature equation gives the evolution of the Darboux coordinates 
\begin{align}
    2 \pi i \partial_\tau q_j = & p_j \cr
    2 \pi i \partial_\tau p_j = & - g^2 \sum_{k \neq j} \wp'(q_k-q_j,\tau),
\end{align}
In particular, these evolutions are Hamiltonian meaning 
\begin{align}
    2 \pi i \partial_\tau q_j = \{ q_j,H(q,p,\tau) \} && 2 \pi i \partial_\tau p_j = \{ p_j,H(q,p,\tau) \}
\end{align}
with the Hamiltonian given by
\begin{align} \label{Ham}
    H (q,p,\tau) = \frac{1}{2} \sum_{i=1}^n p_j^2 + \frac{g^2}{2} \sum_{k \neq j} \wp (q_k-q_j,\tau) 
\end{align}
Where the bracket employed is the canonical Poisson bracket 
\begin{align}
    \{ q_j,p_k \} = \delta_{jk} \, \, \, \, \, \, \, \, \,   \{ p_j,p_k \} =  \{ q_j,q_k \} =0
\end{align}
One realizes the Hamiltonian of the elliptic Calogero-Moser system, or $A_{n-1}$, built from the isomonodromic deformation of a Lax system defined over the torus $T_\tau$ with one marked point located at $z=0$. Furthermore, this matrix admits a local factorization in the neighborhood of the pole giving an irregular type and a residue. 
\end{theorem}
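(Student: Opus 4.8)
The plan is to substitute the explicit Lax pair (\ref{Laxpair}) into the zero curvature equation $2\pi i\,\partial_\tau\td{L}(z)+\partial_z\td{A}(z)=[\td{L}(z),\td{A}(z)]$, regard $(q_j,p_j)$ as functions of the single deformation time $\mathbf{t}=\tau$, and match the diagonal and off-diagonal matrix components separately. The engine of the computation is the triple of functional identities (\ref{period}) satisfied by $x(u,z)$ and $y(u,z)=\partial_u x(u,z)$, together with the heat equation obeyed by $\theta_1$ in the choice (\ref{choice}). Since the diagonal part $P$ of $\td{L}$ equals $2\pi i\,\partial_\tau Q$ by the monodromy structure recorded in (\ref{action}), the relation $2\pi i\,\partial_\tau q_j=p_j$ is essentially structural. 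For the diagonal entries I would note that $D$ is $z$-independent, so the left side reduces to $2\pi i\,\partial_\tau p_j$; the diagonal of the commutator collects $\td{L}_{jl}\td{A}_{lj}-\td{A}_{jl}\td{L}_{lj}$ over $l\neq j$, which with $u=q_j-q_l$ are exactly $x(u,z)y(-u,z)-y(u,z)x(-u,z)$ and collapse via the second identity of (\ref{period}) to $\wp'(u,\tau)$, giving the force law after accounting for $(ig)^2=-g^2$ and the oddness of $\wp'$.

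The off-diagonal entries $(j,k)$ with $j\neq k$ are the delicate part. The left side splits into a chain-rule piece $ig\,y(q_j-q_k,z)(p_j-p_k)$, an explicit time derivative $2\pi i\cdot ig\,(\partial_\tau x)(q_j-q_k,z)$, and the term $ig\,\partial_z y(q_j-q_k,z)$; the commutator on the right is reduced with the first identity of (\ref{period}), which turns each product $xy-yx$ into a single $x$ weighted by a difference of Weierstrass functions. I expect the main obstacle to be precisely this off-diagonal cancellation: one must show that, once $2\pi i\,\partial_\tau q_j=p_j$ is imposed, the explicit derivative $(\partial_\tau x)$ combines with $\partial_z y$ and the commutator remainder to vanish identically. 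This is exactly where the heat equation for $\theta_1$ enters, and it is the feature that separates the non-autonomous isomonodromic setting from the autonomous elliptic CM system, where the explicit-time term is absent.

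Finally, the Hamiltonian claim is a direct verification with the canonical bracket: $\{q_j,H\}=\partial H/\partial p_j=p_j$ reproduces the first evolution equation, and $\{p_j,H\}=-\partial H/\partial q_j$, obtained by differentiating $\tfrac{g^2}{2}\sum_{k\neq l}\wp(q_k-q_l,\tau)$ and using once more the oddness of $\wp'$, reproduces the force law, so the system is Hamiltonian with $H$ as in (\ref{Ham}). The closing statement on the local factorization at $z=0$ needs no new work: the expansion $\td{L}(z)\overset{z\to 0}{=}\td{L}^{[0,1]}z^{-1}+\td{L}^{[0,0]}$ and the diagonalization of the residue $\td{L}^{[0,1]}$ by $J$ carried out before the statement already exhibit the simple resonant irregular type and residue, to which I would simply refer.
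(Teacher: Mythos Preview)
Your proposal is correct and is exactly the computation the paper leaves implicit: the theorem is stated without proof, the paper having only flagged the identities (\ref{period}) as ``essential in order for the Lax pair to satisfy the Lax equation'' and recorded the heat-type relation $2\pi i\,\partial_\tau x+\partial_u\partial_z x=0$ in Appendix~\ref{PProperties}. Your plan---second identity of (\ref{period}) for the diagonal force law, first identity plus the heat equation for the off-diagonal cancellation, and a direct check of the Poisson brackets---is precisely the intended argument, and the reference back to the already-performed local expansion and diagonalisation of $\td{L}^{[0,1]}$ is the right way to handle the last clause.
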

\begin{proposition}
    The Hamiltonian system (\ref{Ham}) is the general Hamiltonian system defined in (\ref{Hamgen}) for the Lie algebra $\mathfrak{su}(n)$.  Moreover, the system is invariant under the action of the Weyl group.
\end{proposition}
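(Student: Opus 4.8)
The plan is to exhibit the elliptic CM Hamiltonian (\ref{Ham}) as the specialization of the Olshanetsky--Perelomov Hamiltonian (\ref{Hamgen}) to the root system of $\mathfrak{su}(n)$, and then to read off the Weyl invariance directly from the general structure of (\ref{Hamgen}). First I would recall the explicit root data for $\mathfrak{su}(n)$, whose root system is of type $A_{n-1}$. Realizing the Cartan subalgebra inside the hyperplane $\{x \in \RR^n : \sum_i x_i = 0\}$, the roots are exactly $\alpha_{jk} = e_j - e_k$ for $j \neq k$, so that $\alpha_{jk}\cdot q = q_j - q_k$. The decisive structural feature is that $A_{n-1}$ is simply laced: all roots share a single length, so the family $(m_{|\alpha|})$ of mass parameters in (\ref{Hamgen}) collapses to one constant $m$. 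This is precisely the regime in which the root-sum potential reduces to a sum over the pairwise differences $q_j - q_k$.

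Next I would substitute $\alpha\cdot q = q_j - q_k$ into (\ref{Hamgen}) and match coefficients with (\ref{Ham}). Since $\mathcal{R}(\mathfrak{g})$ is the full set of roots, the sum $\sum_{\alpha}\wp(\alpha\cdot q)$ runs over all ordered pairs $(j,k)$, $j\neq k$, and hence equals $\sum_{j\neq k}\wp(q_j-q_k)$, which is exactly the potential appearing in (\ref{Ham}) up to the overall constant. Comparing the two expressions fixes the coupling identification, matching $-m^2$ with $g^2/2$ so that $m^2 = -g^2/2$; the negative sign is not an accident but reflects the imaginary coupling $ig$ already visible in the Lax pair (\ref{Laxpair}), so that the real CM coupling $g$ corresponds to a purely imaginary mass in the Olshanetsky--Perelomov normalization. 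Carrying out this bookkeeping is where I expect the only real friction to lie: not in any conceptual difficulty, but in tracking the factor of two that distinguishes summing over all roots from summing over positive roots (harmless here because $\wp$ is even), and in reconciling the $+g^2/2$ of (\ref{Ham}) with the $-m^2$ of (\ref{Hamgen}).

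Finally, for the Weyl invariance I would invoke the design principle stated just after (\ref{Hamgen}), namely that the masses depend only on the root length. The Weyl group of $A_{n-1}$ is the symmetric group $S_n$, acting on $(q,p)$ by simultaneously permuting indices, $q_j \mapsto q_{\sigma(j)}$ and $p_j \mapsto p_{\sigma(j)}$. The kinetic term $\tfrac12\sum_j p_j^2$ is manifestly invariant under any such permutation. For the potential, a permutation $\sigma$ sends $\alpha_{jk}$ to $\alpha_{\sigma(j)\sigma(k)}$, that is, it permutes the root system among itself; since the single mass parameter depends only on the constant root length it is untouched, and the sum $\sum_{j\neq k}\wp(q_j-q_k)$ is merely reindexed over the same set of ordered pairs. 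Hence $H$ is $S_n$-invariant, which is exactly invariance under the Weyl group of $\mathfrak{su}(n)$, completing the argument.
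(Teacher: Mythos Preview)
Your proposal is correct and follows essentially the same route as the paper: identify the $A_{n-1}$ root system with $\alpha_{jk}\cdot q = q_j - q_k$, specialize (\ref{Hamgen}) accordingly, and then verify Weyl invariance by recognizing the Weyl group as $S_n$ acting by simultaneous permutation of $(q_j,p_j)$. Your version is somewhat more thorough than the paper's, which does not spell out the simply-laced point or the coupling identification $m^2=-g^2/2$, but the underlying argument is the same.
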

\begin{proof}
    The roots of $\mathfrak{su}(n)$ lie on a hyperplane $\mathbb{C}^n$ which motivates the use of $(q_i,p_i)_{1 \leq i \leq n}$ as dynamical variables and the number of degrees of freedom is $n-1$ which is equivalent to the rank of $\mathfrak{su}(n)$. Moreover, the roots are given by $\alpha = (q_i-q_j) _{1 \leq i,j \leq n}$ for $i \neq j$. One realises after that the system (\ref{Ham}) as a special case of the more general Hamiltonian (\ref{Hamgen}). On the other side, it is well known that the Weyl group in this case is the symmetric group $S_n$ of permutations with $n$ variables, it is clear due to the sums that the Hamiltonian is invariant under permutations of the coordinates $p_i \rightarrow p_{\mathcal{\sigma} (i)}, \, \, \, q_i \rightarrow q_{\mathcal{\sigma} (i)}$ with $\mathcal{\sigma} \in S_n$.  
\end{proof}
    
\begin{remark}
    By exchanging $2 \pi i \partial_\tau \rightarrow \partial_t$ in the construction, one gets the same Hamiltonian and the zero curvature equation 
    \begin{align}
        \partial_t \td{L}(z) + \partial_z \td{A}(z) = [\td{L}(z),\td{A}(z)]
    \end{align}
    which is the usual isospectral Lax equation studied to get the $A_{n-1}$ system. Note that following Manin's convention, one sets the primitive periods to $2 \omega_1 = 1$ and $2 \omega_3 = \tau$.
\end{remark}
\begin{remark}
    The elliptic CM system has been studied extensively from the point of view of isospectral integrable systems, In particular, a set of conserved quantities may be extracted from the trace of the powers of the Lax matrix, this set forms a completely integrable structure. Evidently, the Hamiltonian itself is related to the quadratic trace of the Lax matrix. In this construction, one usually takes the choice of $x(u,z)$ and $y(u,z)$ as functions of the Weierstrass sigma function. Note that this choice is not suitable however for the construction of isomonodromy systems.
\end{remark}
\begin{remark}
    The setup in \cite{Takasaki_1999} relating the isospectral approach to the isomonodromic one by a simple exchange works exclusively in this setting, indeed, both approaches admit huge similarities in the case of one simple pole. Nevertheless, this is not a general fact as proven in \cite{Marchal_2024} for a case on the Riemann sphere. 
\end{remark}

\subsection{The fundamental symplectic $2-$form and the symplectic structure}
The non-autonomous nature of the Hamiltonian system motivates the extension of the phase space, we define the symplectic $2-$form for a non-autonomous Hamiltonian  
\begin{definition}
We define the Fundamental symplectic $2-$form
\begin{align}
    \Omega_{iso} = \sum_{j=1}^n dq_j \wedge dp_j - \frac{1}{2 \pi i } d H(q,p,\tau) \wedge d\tau
\end{align}
as a fibre bundle above the base, it is a $2-$form on the phase space with coordinates $(p_j,q_j,\tau)$. 
\end{definition}

\begin{remark}
    In \cite{Del_Monte_2023}, the authors considered a more general symplectic structure where they included the monodromy in the term $d g \wedge d\eta $ when defining the symplectic $2-$form. This is indeed possible since the Hamiltonian admits an explicit dependence on the monodromy. The term $\eta$ is an additional degree of freedom related to the diagonalization of the monodromy dependence of the Lax matrix. Note that in this article the monodromy is considered fixed and hence this term does not appear.   
\end{remark}

The defined extended symplectic $2-$form clearly admits a restriction to each fibre equals to that of the canonical $2-$form associated to the system. The existence of the canonical $2-$form is well known due to Darboux's theorem and the existence of its extension is discussed in remark \ref{extension}. Let us review how from this definition, one may define the connection form and prove its symplectic property.   \\

The dynamics of the system on the extended phase space and its extended set of coordinates may be encoded in a Hamiltonian vector that presents the evolution of the coordinates, we define this vector field by contracting the extended $2-$form
\begin{definition}
    Let us define the horizontal vector field as
    \begin{align}
        X_H = 2 \pi i  \frac{\partial}{\partial \tau} + \sum_{j=1}^n \left( \frac{\partial H}{\partial p_j} \frac{\partial}{\partial q_j } - \frac{\partial H}{\partial q_j} \frac{\partial}{\partial p_j }    \right)
    \end{align}
    The definition of this vector field is obtained by solving the equation $i_{X_H} \Omega = 0$, where $i_{X_H}$ is the interior product with respect to $X_H$. In particular, the Ehresmann connection spanned by $X_H$ defining the Hamiltonian system is seen as the $\Omega$ orthogonal complement of the vertical bundle 
\end{definition}
It is crucial to notice that the vector field $X_H$ spans the space of Horizontal vector fields, i.e. the vectors that encode the changes of the Darboux coordinates induced by the change of the third coordinate $\tau$ (The vertical vectors on the other side spans the space where $\tau$ is held constant). The concept of a connection on a fibered manifold above a given manifold is a smooth field of subspaces transversal to the fibres in the sense of Ehresmann. Any vector tangent to the fibered space may be decomposed into a vertical vector and a horizontal one. Any curve $\gamma$ on the base starting at a point $a_0$ and ending at a point $a_1$ may be lifted to a curve on the fibered space starting at a point $\overline{a}_0$ above $a_0$ and ending at the point $\overline{a}_1$ above $a_1$. This is the well known horizontal lift of the base curve $\gamma$ and it realizes the parallel transport of $\overline{a}_0$. 

\begin{definition}
The connection form is defined as the operator that annihilates the horizontal vector field 
\begin{align}
    \theta (X_H) = 0
\end{align}
In particular, this definition ensures the flatness of the connection since the vector field $X_H$ is chosen so that $i_{X_H}\Omega = 0$ or in terms of the connection as $\theta = Ker \Omega$.
\end{definition}

\begin{remark}
    There are three equivalent ways to define a connection on a fibered space to encode the same notion of parallel transport, the one we are using is based on the use of generators of the vector fields. Other equivalent ways are based on the use of the derivatives of the Darboux coordinates and on differential forms.    
\end{remark}

This definition and the properties of the fundamental $2-$from state that the Hamiltonian system is in fact completely determined by the symplectic $2-$form. Moreover, for any $2-$form defined on the total space of a symplectic fibre bundle admitting a restriction on each fibre equal to that of the canonical $2-$form, the $\Omega$ orthogonal complement of the vertical bundle defines an Ehresmann connection on the tangent space. It is a well known fact that if $\Omega$ is closed, then the induced connection is symplectic. 

\begin{remark} \label{extension}
     The problem of finding a closed extension of the canonical $2-$form along the fibres is a purely topological problem that depends on the de Rham cohomology class, this problem was solved in \cite{article} with a hypothesis on the fibre. The proof of the existence admits a fruitful interpretation in terms of connections and their parallel transport \cite{article,Guillemin1996SymplecticFA}. 
\end{remark}

\begin{proposition}
    The fundamental symplectic $2-$form is closed, and the connection form defined is symplectic i.e. preserves the symplectic structure of the system.
\end{proposition}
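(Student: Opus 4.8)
The plan is to treat the two claims separately: closedness of $\Omega_{iso}$ is essentially formal, and the symplectic character of the connection then follows either from a short Cartan-calculus computation or from the general fibration theorem recalled above. For closedness I would exhibit $\Omega_{iso}$ as an exact form. Setting the tautological one-form
\[
\Theta := \sum_{j=1}^n q_j\, dp_j - \frac{1}{2\pi i}\, H(q,p,\tau)\, d\tau ,
\]
a direct differentiation gives $d\Theta = \sum_{j=1}^n dq_j\wedge dp_j - \frac{1}{2\pi i}\, dH\wedge d\tau = \Omega_{iso}$, where I used $d(H\, d\tau)=dH\wedge d\tau$. Hence $\Omega_{iso}=d\Theta$, so $d\Omega_{iso}=0$; equivalently, each summand is a wedge of exact one-forms and $d^2=0$ annihilates it term by term. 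No property of $\wp$ enters here: the closedness is purely structural and holds for any non-autonomous Hamiltonian of this shape.

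For the second claim I would show that the horizontal flow preserves the fibrewise form $\om:=\sum_{j=1}^n dq_j\wedge dp_j$. Contracting $X_H$ with $\om$ gives $i_{X_H}\om = \sum_{j=1}^n\big(\partial_{p_j}H\, dp_j+\partial_{q_j}H\, dq_j\big)=dH-\partial_\tau H\, d\tau$, and since $d\om=0$, Cartan's formula yields
\[
\mathcal{L}_{X_H}\om = d\big(i_{X_H}\om\big) = d\big(dH-\partial_\tau H\, d\tau\big) = -\, d(\partial_\tau H)\wedge d\tau .
\]
The right-hand side carries $d\tau$ as a factor, so its pullback to any fibre $\{\tau=\mathrm{const}\}$ vanishes; hence the flow of $X_H$, which realises the parallel transport of $\theta=\ker\Omega_{iso}$, maps fibres to fibres symplectomorphically. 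This is precisely the statement that the connection is symplectic. Equivalently one may invoke the general result recalled before Remark \ref{extension}: a closed two-form on the total space of a symplectic fibre bundle whose restriction to each fibre equals the canonical form induces, through the $\Omega_{iso}$-orthogonal complement of the vertical bundle, a symplectic Ehresmann connection \cite{article,Guillemin1996SymplecticFA}.

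The one point demanding care --- more bookkeeping than genuine difficulty --- is to confirm that the spanning field and the form are compatible, namely $i_{X_H}\Omega_{iso}=0$, so that the connection described through $X_H$ coincides with the one described through $\theta=\ker\Omega_{iso}$. This is the defining relation used to introduce $X_H$, and checking it reduces to the contraction above together with $X_H(H)=2\pi i\,\partial_\tau H$ (the fibrewise Hamiltonian part contributes nothing to $X_H(H)$). With closedness in hand and $\mathcal{L}_{X_H}\om$ shown to vanish along the fibres, both assertions of the proposition follow.
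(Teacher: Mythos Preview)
Your proposal is correct. The paper's own proof takes a more computational route for closedness: it expands $d(dH\wedge d\tau)$ in coordinates, writing out all second partial derivatives of $H$ and checking that the terms cancel pairwise by the symmetry of mixed partials and the antisymmetry of the wedge product. Your argument bypasses this entirely by exhibiting $\Omega_{iso}$ as an exact form $d\Theta$, which is cleaner and makes the structural nature of the result immediately visible. Both arguments use nothing about $\wp$, as you note.

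For the second assertion, the paper does not give an independent argument inside the proof; it relies on the sentence preceding the proposition (``It is a well known fact that if $\Omega$ is closed, then the induced connection is symplectic'') together with the references in Remark~\ref{extension}. Your Cartan-calculus computation of $\mathcal{L}_{X_H}\omega$ and its vanishing after pullback to the fibres supplies exactly what the paper leaves as a citation, so in this respect your write-up is more self-contained. The alternative you mention at the end --- invoking the general fibration theorem once closedness is established --- is precisely what the paper does implicitly.
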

\begin{proof}
    let us take the exterior derivative of the $2-$form defined
    \begin{align}
        d \Omega_{iso} = & d \left( \sum_{j=1}^n dq_j \wedge dp_j  \right) - \frac{1}{2 \pi i } d \left(  d H(q,p,\tau) \wedge d\tau \right)
    \end{align}
    The first term is the canonical $2-$form on each fibre and is known to be exact, hence vanishes We are left with the second term which we write
    \begin{align}
        d \left(  d H(q,p,\tau) \wedge d\tau \right) =& d \left( \sum_{j=1}^n \frac{\partial H}{\partial q_j } d q_j \wedge d \tau    +  \sum_{j=1}^n \frac{\partial H}{\partial p_j } d p_j \wedge d \tau + \frac{\partial H}{\partial \tau } d \tau \wedge d \tau \right) 
    \end{align}
    The term $d \tau \wedge d \tau$ vanishes leaving us with 
    \begin{align}
         d \left(  d H(q,p,\tau) \wedge d\tau \right) =& \sum_{j=1}^n d \left( \frac{\partial H}{\partial q_j} \right) \wedge dq_j \wedge d \tau + \sum_{j=1}^n d \left( \frac{\partial H}{\partial p_j} \right) \wedge dp_j \wedge d \tau  \cr
         &  + \sum_{j=1}^n \frac{\partial H}{\partial q_j} d \left( dq_j \wedge d\tau \right) + \sum_{j=1}^n \frac{\partial H}{\partial p_j} d \left( dp_j \wedge d\tau \right) \cr
         = & \sum_{j=1}^n \left( \frac{\partial^2 H}{\partial \tau \partial q_j} d \tau + \sum_{k=1}^n \frac{\partial^2 H}{\partial q_k \partial q_j} d q_k + \sum_{k=1}^n \frac{\partial^2 H}{\partial p_k \partial q_j} d p_k   \right) \wedge d q_j \wedge d \tau  \cr
        & + \sum_{j=1}^n \left( \frac{\partial^2 H}{\partial \tau \partial p_j} d \tau + \sum_{k=1}^n \frac{\partial^2 H}{\partial q_k \partial p_j} d q_k + \sum_{k=1}^n \frac{\partial^2 H}{\partial p_k \partial p_j} d p_k   \right) \wedge d p_j \wedge d \tau \cr
    \end{align}
    Due to the anti-symmetric property of the wedge product and that the second derivatives of the Hamiltonian commute, one obtains that $d \Omega_{iso} =0$, which ends the proof.
\end{proof}
\begin{remark}
    The last proposition states the closure of the extension of the canonical $2-$form, which induces the symplectic property of the connection. In other words, parallel transport of connections spanned by the vector field $X_H$ preserves the symplectic structure.   
\end{remark}

\section{Conclusion and outlooks}
In this article, we have revisited the theory of isomonodromic deformations on the torus and discussed the elliptic CM model which appears as an instance of the general theory. The non-autonomous nature of the Hamiltonian system motivated the consideration of the extended symplectic $2-$ form. Several generalizations of the present work are possible, notably; 
\begin{enumerate}
    \item The isomonodromy systems of the Painlevé equations is a well known field of studies, the generalization of the construction of isomonodromy systems in the elliptic case has been studied by many authors. Following \cite{marchal2024hamiltonianrepresentationisomonodromicdeformations,MarchalAlameddineP1Hierarchy2023}, one would consider the moduli space of meromorphic connections having a periodic pole structure and benefit from the gauge action to solve the zero curvature equation that yields the evolution of the Darboux coordinates in a more general setting. The periodicity of the pole structure shall be encoded in an elliptic function that appears in the entries of the Lax matrix as the case of the elliptic CM systems. The local fundamental decomposition is also available in this setting, however, one must understand the twist above the torus and how the monodromy is considered. After getting an explicit candidate for the Hamiltonian, one performs the reduction using the symmetries of the tangent space and gets a generalized version of the elliptic CM system up to a symplectic change of coordinates. Thus obtaining the resulted Hamiltonian as an isomonodromic deformation system of a meromotphic connection above the torus. This in particular builds the bridge between the rational world (performed on the sphere) and the elliptic world of isomonodromic deformations and their Hamiltonian systems. This also requires a deep understanding of isomonodromy systems and a suitable definition of an \textit{absolute connection} in the sense of \cite{Boalch2012}.
    \item A natural question that arises when studying isomonodromy systems is that of the \textit{quantization} of these systems and the resultant quantum version. It is now very well known that isomonodromic systems on the sphere give rise to the KZ equation, which is an equation for correlation functions in the Wess–Zumino–Witten model for two-dimensional conformal field theory \cite{KNIZHNIK198483}. This system was obtained by deformation quantization of the Schlesinger system \cite{Beauville_1994,Las98} and the KZ connection is known to be equivalent to a genus zero Hitchin connection in geometric quantization \cite{harnad1994quantumisomonodromicdeformationsknizhnikzamolodchikov,Reshetikhin1992}. Several generalizations followed \cite{millson2004casimiroperatorsmonodromyrepresentations,felder2000differentialequationscompatiblekz,Rembado_2019}. The quantization of isomonodromic deformations on the torus have been shown to give KZB equations \cite{Korotkin_1997}, one possible generalization of this result may be achieved by using the \textit{topological recursion} \cite{EO07} and its quantization \cite{Quantization_2021}. Indeed, in this setting one needs to consider $\hbar-$connections through the introduction of a formal parameter $\hbar$ using a suitable rescaling.
    \item The JMU differential $\omega_{JMU}$ has been computed in a more extended $SL_n(\mathbb{C})$ setting in \cite{Del_Monte_2023}, through which the isomonodromic tau function was obtained for isomonodromic deformations on the torus. The construction of a random matrix model allows the correspondence between the isomonodromic tau function and the partition function associated to the matrix model defined by the general potential up to some constant terms dependent on the normalization. One would expect an identification between both functions, by analogy to what happens in the rational case. The geometric construction of isomonodromy systems on the torus is somewhat analogous to the one over the sphere. On the contrary, the construction of a matrix model in the elliptic setting admits substantial changes and modifications.   
\end{enumerate}

\section*{Acknowledgments}
The author is grateful to his supervisor Olivier Marchal for fruitful discussions and remarks on the manuscript of the paper. The author thanks Gabriele Rembado, Mathieu Dutour, Mattia Cafasso for discussions and Fabrizio Del Monte for discussions and for the reference \cite{Bonelli2021}. 

\appendix
\section{Reminder on elliptic functions} \label{B}
\subsection{The Theta function and its product representation}
The properties of the Lamé function $x(u,z)$ defined in (\ref{choice}) are dependent on the properties of the elliptic Jacobi function. In this appendix, we recall some elementary properties for the Jacobi theta function defined and its product representation. Let us write 
\begin{align}
    \theta_1(z) = - \sum_{n=-\infty}^\infty \nu^{(n + \frac{1}{2})^2} \eta^{2(n+\frac{1}{2})}
\end{align}
where $\nu = exp(\pi  i \tau)$ is the nome and $\eta = exp( \pi i (z + \frac{1}{2}))$. Due to Liouville's theorem, it is  a quasi-periodic function with 
\begin{align}
    \theta_1 (z+1) = & - \theta_1(z) \cr
    \theta_1 (z+\tau) = & -e^{- \pi i (\tau + 2z )} \theta_1(z) 
\end{align}
It is an entire function with simple zeros located at the lattice points $z=m+n\tau$. This expression is related to the famous Jacobi theta function via
\begin{align}
    \theta_1(z) = & - e^{\frac{1}{4} \pi i \tau + \pi i z} \upsilon (z+ \frac{1}{2} + \frac{1}{2} \tau ; \tau) \cr
    = & - e^{\frac{1}{4} \pi i \tau + \pi i z} \sum_{n=-\infty}^\infty (e^{\pi i n^2 \tau}) e^{2 \pi i n (z +\frac{1}{2} + \frac{1}{2} \tau)}
\end{align}
In order to simplify the form of the Lamé function, we resort to the product representation of theta functions, indeed \cite{hardy75}, the presentation of the theta function in terms of the nome and the argument allows an application of the Jacobi triple product which is a special case of the Macdonald identities. This product is written
\begin{align}
    \sum_{n=\infty}^\infty \nu^{n^2} \eta^{2n} = \prod_{m=1}^\infty \left( 1- \nu^{2m} \right) \left( 1+  \eta^2 \nu^{2m-1}  \right) \left( 1 +  \eta^{-2}  \nu^{2m-1}  \right)
\end{align}
This equality holds for any complex numbers $\nu$ and $\eta$ with the assumptions  $ |\nu|  < 1$ and $\eta \neq 0$. Applying this equality to $\theta_1(z)$ expressed in terms of the Jacobi elliptic fnunction provides
\begin{align}
    \theta_1(z) = & i \nu^{\frac{1}{4}} \eta \sum_{n=- \infty}^\infty q^{n^2} \td{\eta}^{2n} \cr
    = & i \nu^{\frac{1}{4}} \eta  \prod_{m=1}^\infty \left( 1- \nu^{2m} \right) \left( 1+  \td{\eta}^2 \nu^{2m-1}  \right) \left( 1 + \td{ \eta}^{-2}  \nu^{2m-1}  \right) \cr
    = & i \nu^{\frac{1}{4}}  \prod_{m=1}^\infty \left( 1- \nu^{2m} \right) \left( 1+  \eta^2 \nu^{2m}  \right) \left( 1 +  \eta^{-1}  \nu^{2m-2}  \right)
\end{align}
where we used $\td{\eta} = exp(\pi i (z +\frac{1}{2}+ \frac{1}{2} \tau)) = \eta \nu^{\frac{1}{2}}$  . This expression admits the following derivative
\begin{align}
     \theta'_1(z) = & i \nu^{\frac{1}{4}} \sum_{j=1}^\infty \bigg(  \left( 1- \nu^{2j} \right) \left( 2 \pi i  \eta^2 \nu^{2j} \left( 1 +  \eta^{-1}  \nu^{2j-2}  \right) - \pi i \eta^{-1}  \nu^{2j-2}  \left( 1+ \eta^2 \nu^{2j} \right) \right) \cr
     & \left( \prod_{m=1, m \neq j}^\infty \left( 1- \nu^{2m} \right) \left( 1+  \eta^2 \nu^{2m}  \right) \left( 1 +  \eta^{-1}  \nu^{2m-2}  \right) \right) \bigg)
\end{align} 
This expression may also be viewed as 
\begin{align}
    \theta'_1(z) = & \theta_1(z) \sum_{j=1}^\infty \frac{ 2 \pi i \eta^2 \nu^{2j}  \left( 1+ \eta^{-1} \nu^{2j -2} \right) - \pi i \eta^{-1}  \nu^{2j-2}  \left( 1+ \eta^2 \nu^{2j} \right)   }{\left( 1+  \eta^2 \nu^{2j}  \right) \left( 1 +  \eta^{-1}  \nu^{2j-2}  \right)}
\end{align}
evaluated at $z \to 0$ both expressions lead to the forms
\begin{align}
        \theta'_1(0) = & i \nu^{\frac{1}{4}} \sum_{j=1}^\infty \bigg(  \left( 1- \nu^{2j} \right) \left(- 2 \pi i   \nu^{2j} \left( 1 - i  \nu^{2j-2}  \right) - \pi \nu^{2j-2} \left( 1 - \nu^{2j} \right) \right) \cr
     & \left( \prod_{m=1, m \neq j}^\infty \left( 1- \nu^{2m} \right) \left( 1 - \nu^{2m}  \right) \left( 1 -i  \nu^{2m-2}  \right) \right) \bigg) \cr
     = &    \pi  \nu^{\frac{1}{4}} \sum_{j=1}^\infty  \bigg(   \left( 1- \nu^{2j} \right) \left( -2 i \nu^{2j} -  \nu^{4j-2} - \nu^{2j-2}\right) \cr
     & \left( \prod_{m=1, m \neq j}^\infty \left( 1- \nu^{2m} \right) \left( 1 - \nu^{2m}  \right) \left( 1 -i  \nu^{2m-2}  \right) \right) \bigg) 
\end{align}
or, equivalently
\begin{align}
     \theta'_1(0) = &  \theta_1(0) \sum_{j=-1}^\infty \left(  \frac{-2 \pi i \nu^{2j}}{1-\nu^{2j}}  +\frac{ - \pi \nu^{2j-2}}{1-i \nu^{2j-2}} \right) \cr
     = &  i \nu^{\frac{1}{4}} \prod_{m=1}^\infty \left( 1-\nu^{2m} \right) \left( 1-\nu^{2m} \right) \left( 1-i \nu^{2m-2} \right)  \sum_{j=1}^\infty \left(  \frac{-2 \pi i \nu^{2j}}{1-\nu^{2j}}+   \frac{- \pi \nu^{2j-2}}{1-i \nu^{2j-2}} \right)
\end{align}
This expression has another interpretation particularly interesting when $z$ does not have an imaginary part
\begin{align}
    \theta_{1}(z) = - 2 \nu^{\frac{1}{4}} \cos(\pi (z+\frac{1}{2})) \prod_{m=1}^\infty \left(1 - \nu^{2m} \right) \left(1 + 2 \cos(2 \pi (z+\frac{1}{2}) )\nu^{2m} + \nu^{4m} \right)
\end{align}
In order to use this form, one makes the assumption that the nome $ |\nu|  < 1$. This form admits the derivative 
\begin{align}
    \theta_1' (z) = & 2 \pi \nu^{\frac{1}{4}} \sin(\pi (z+\frac{1}{2})) \prod_{m=1}^\infty \left(1 - \nu^{2m} \right) \left(1 + 2 \cos(2 \pi (z+\frac{1}{2}) )\nu^{2m} + \nu^{4m} \right) \cr
    & - 2 \nu^{\frac{1}{4}} \cos(\pi (z+\frac{1}{2})) \sum_{j=1}^\infty \bigg( \left( \left(1 - \nu^{2j} \right) \left(-4 \pi \sin(2 \pi (z+\frac{1}{2}) ) \nu^{2m}  \right) \right) \cr
    & \prod_{m=1, \,\, m \neq j }^\infty  \left(1 - \nu^{2m} \right) \left(1 + 2 \cos(2 \pi (z+\frac{1}{2}) )\nu^{2m} + \nu^{4m} \right) \bigg)
\end{align}
The above expression at the limit when $z\to 0$ simplifies to the expression
\begin{align}
     \theta_1' (0) = & 2 \pi \nu^{\frac{1}{4}} \prod_{m=1}^\infty \left(1 - \nu^{2m} \right) \left( 1-2 \nu^{2m} + \nu^{4m}  \right)
\end{align}
Although defined slightly different then the famous Jacobi elliptic theta functions, $\theta_1(z)$ satisfies the well known heat equation
\begin{align} \label{heat}
    4 \pi i \frac{\partial \theta_1 (z)}{\partial \tau} = \frac{\partial^2 \theta_1(z)}{\partial z ^2}
\end{align}
of the Jacobi theta functions. 

\subsection{The Lamé functions} \label{PProperties}
The choice of functions defined in (\ref{choice}) admits a very rich structure and the quasi periodicity needed is ensured by these properties, among them are
\begin{proposition}
    The Lamé function defined admits the following analytical properties;
    \begin{itemize}
        \item The zeroes of the Jacobi theta function turns $x(u,z)$ into a meromorphic function in both variables with poles located at the lattice points $u=m+n \tau$ and $z=m+n \tau$.
        \item From the quasi periodic behavior of the Jacobi theta function, $x(u,z)$ inherits the following quasi periodic behavior
        \begin{align}
    x(u+1,z) = x(u,z) && x(u+ \tau,z) = e^{2 \pi i z} x(u,z) \cr
    x(u,z+1) = x(u,z) && x(u, z + \tau) = e^{2 \pi i u} x(u,z).
        \end{align}
       \item At the origin, the function $x(u,z)$ admits the following analytic behavior
        \begin{align}
       x(u,z) &\overset{z\to 0}{=} - \frac{1}{z} + \rho(u) + O(z)  \cr
       x(u,z) &\overset{u\to 0}{=}  \frac{1}{u} - \rho(z) + O(u)  
        \end{align}
        this property is essential for the link with the geometric construction. 
        \item The Lamé function satisfies the equation
        \begin{align}
            2 \pi i \frac{\partial x(u,z)}{\partial \tau } + \frac{\partial^2 x(u,z)}{\partial u  \partial z} = 0
        \end{align}
        this is a consequence of the heat equation (\ref{heat}) satisfied by the theta functions.
    \end{itemize}
\end{proposition}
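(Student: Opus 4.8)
The plan is to establish the four bullets separately: the first three are direct consequences of the structure of $\theta_1$ recalled above, while the heat equation is the one point that requires genuine work. For meromorphy and the pole structure I would note that $\theta_1$ is entire with simple zeros exactly on the lattice $\mathbb{Z}+\tau\mathbb{Z}$ and that $\theta_1'(0)$ is a nonzero ($\tau$-dependent) constant; hence the numerator $\theta_1(z-u)\theta_1'(0)$ is entire while the denominator $\theta_1(z)\theta_1(u)$ vanishes to first order precisely when $z$ or $u$ lies in $\mathbb{Z}+\tau\mathbb{Z}$, and for generic $u$ the numerator does not vanish there, so $x(u,z)$ is meromorphic with simple poles at $z,u\in\mathbb{Z}+\tau\mathbb{Z}$. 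For the quasi-periodicity I would simply substitute the transformation laws $\theta_1(w+1)=-\theta_1(w)$ and $\theta_1(w+\tau)=-e^{-\pi i(\tau+2w)}\theta_1(w)$ into the definition: under $z\to z+1$ and $u\to u+1$ the sign in the numerator cancels the one in the denominator; under $u\to u+\tau$ (resp. $z\to z+\tau$) the exponential prefactors produced by $\theta_1(z-u-\tau)$ and by $\theta_1(u+\tau)$ (resp. $\theta_1(z+\tau)$) combine, after the quadratic-in-$\tau$ and crossed terms cancel, to the single factor $e^{2\pi i z}$ (resp. $e^{2\pi i u}$). This is a short bookkeeping of exponents.

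The origin expansions follow from Taylor-expanding $\theta_1$ at $0$. Since $\theta_1$ is odd one has $\theta_1(w)=\theta_1'(0)w+O(w^3)$, so near $z=0$, $\theta_1(z)=\theta_1'(0)z\bigl(1+O(z^2)\bigr)$ and $\theta_1(z-u)=-\theta_1(u)+\theta_1'(u)z+O(z^2)$; dividing gives $x(u,z)=-z^{-1}+\theta_1'(u)/\theta_1(u)+O(z)=-z^{-1}+\rho(u)+O(z)$, and the symmetric computation at $u=0$ yields $u^{-1}-\rho(z)+O(u)$.

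The nontrivial point is the heat equation, and this is where I expect the work to lie. Writing $g:=\log x$ I would reduce the claim $2\pi i\,x_\tau+x_{uz}=0$ to $2\pi i\,g_\tau+g_ug_z+g_{uz}=0$. The spatial derivatives are immediate: with $a:=\rho(z-u)$, $b:=\rho(u)$, $c:=\rho(z)$ one finds $g_u=-(a+b)$, $g_z=a-c$, $g_{uz}=-\rho'(z-u)$. For the $\tau$-derivative I would use the heat equation (\ref{heat}) in the form $\partial_\tau\log\theta_1(w)=\frac{1}{4\pi i}\bigl(\rho'(w)+\rho(w)^2\bigr)$, which comes from $\theta_1''/\theta_1=\rho'+\rho^2$, together with its differentiated version at $w=0$, namely $\partial_\tau\log\theta_1'(0)=\frac{1}{4\pi i}\,\theta_1'''(0)/\theta_1'(0)$ (obtained by commuting $\partial_\tau$ and $\partial_w$ in the heat equation). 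Substituting everything and clearing the common factor, the identity collapses to
\[
\frac{\theta_1'''(0)}{\theta_1'(0)}-\bigl(\rho'(z-u)+\rho'(u)+\rho'(z)\bigr)-\bigl(\rho(z-u)+\rho(u)-\rho(z)\bigr)^2=0.
\]

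To close this I would invoke two classical facts about $\rho=\theta_1'/\theta_1$: that $\rho'(w)=-\wp(w,\tau)+\lambda$ for a $\tau$-dependent constant $\lambda$ with $\theta_1'''(0)/\theta_1'(0)=3\lambda$ (read off from the Laurent expansion $\rho(w)=w^{-1}+\tfrac13\lambda' w+\cdots$ against $\wp(w,\tau)=w^{-2}+\cdots$), and the elliptic addition identity $\bigl(\rho(z-u)+\rho(u)-\rho(z)\bigr)^2=\wp(z-u,\tau)+\wp(u,\tau)+\wp(z,\tau)$. The latter follows from $\zeta(\alpha)+\zeta(\beta)-\zeta(\alpha+\beta)=\tfrac12\bigl(\wp'(\alpha)-\wp'(\beta)\bigr)/\bigl(\wp(\alpha)-\wp(\beta)\bigr)$ and the duplication formula, the linear parts of $\rho$ dropping out because the three arguments $z-u,\,u,\,-z$ sum to zero. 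With these, $\rho'(z-u)+\rho'(u)+\rho'(z)=-\sum\wp+3\lambda$ and the $\wp$-sums cancel against the squared term while the two occurrences of $3\lambda$ cancel, giving $0$. The main obstacle is therefore not any single hard step but the careful treatment of the $\tau$-dependence of the normalizing constant $\theta_1'(0)$ and the recognition that the purely elliptic remainder is exactly the addition identity above.
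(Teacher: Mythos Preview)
Your argument is correct and, for the first three bullets, is exactly what the paper intends: the paper's own proof consists of the single sentence ``To prove these properties one relies on the properties of the Jacobi theta function discussed in the previous subsection of the appendix,'' and your derivations of meromorphy, quasi-periodicity, and the two Laurent expansions are precisely that appeal made explicit.

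For the heat equation you go well beyond the paper, which merely flags it as ``a consequence of the heat equation (\ref{heat})'' without any computation. Your route---passing to $g=\log x$, expressing $2\pi i\,g_\tau$ via $\partial_\tau\log\theta_1=(4\pi i)^{-1}(\rho'+\rho^2)$ and $\partial_\tau\log\theta_1'(0)=(4\pi i)^{-1}\theta_1'''(0)/\theta_1'(0)$, and reducing everything to the classical identity $(\zeta(\alpha)+\zeta(\beta)+\zeta(\gamma))^2=\wp(\alpha)+\wp(\beta)+\wp(\gamma)$ for $\alpha+\beta+\gamma=0$---is a clean and complete verification. Two cosmetic remarks: in your Laurent expansion of $\rho$ the coefficient should read $\tfrac13(\theta_1'''(0)/\theta_1'(0))$ rather than ``$\tfrac13\lambda'$''; and what you call ``the duplication formula'' is really the $\wp$-addition theorem $\wp(\alpha+\beta)=-\wp(\alpha)-\wp(\beta)+\tfrac14\bigl((\wp'(\alpha)-\wp'(\beta))/(\wp(\alpha)-\wp(\beta))\bigr)^2$. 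Neither affects the validity of the argument.
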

To prove these properties one relies on the properties of the Jacobi theta function discussed in the previous subsection of the appendix. One may extract as well similar properties for the function $\rho(z)$;
\begin{proposition}
The function $\rho(z)$ enjoys the following set of analytic properties
\begin{itemize}
    \item It is a meromorphic function with poles located at the Lattice points $z=m+n \tau$.
    \item It is an odd function with an additive quasi periodicity
    \begin{align}
        \rho(-z) = - \rho(z), && \rho(z+1) = \rho(z), && \rho(z+\tau) = \rho(z)- 2 \pi i 
    \end{align}
    easily seen from the behavior of the theta functions.
    \item At the origin of the complex plane, it admits the following behavior
    \begin{align}
        \rho(z)  &\overset{z\to 0}{=} \frac{1}{z} + \frac{\theta_1'''(0) }{3 \theta_1' (0)}z + O(u^3).
    \end{align}
\end{itemize}
\end{proposition}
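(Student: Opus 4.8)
The plan is to use throughout that $\rho(z)$ is, by definition, the logarithmic derivative of the first Jacobi theta function, $\rho(z) = \theta_1'(z)/\theta_1(z) = \partial_z \ln\theta_1(z)$, so that each of the three asserted properties descends directly from the corresponding property of $\theta_1(z)$ recorded in the previous subsection of the appendix. Thus the whole argument is a transfer of structure from $\theta_1$ to its logarithmic derivative, and no independent input is required beyond the quasi-periodicity relations and the series representation already established.

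First I would settle meromorphy and the location of the poles. Since $\theta_1$ is an entire function whose zeros are all simple and are located precisely at the lattice points $z = m + n\tau$, the quotient $\theta_1'/\theta_1$ is holomorphic on the complement of the lattice and acquires a simple pole of residue $1$ at each simple zero of $\theta_1$. Hence $\rho$ is meromorphic with poles exactly at $z = m + n\tau$, which also explains the leading $1/z$ in the origin expansion.

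Next, for parity and quasi-periodicity, I would differentiate the transformation laws of $\theta_1$ and take ratios. Oddness $\theta_1(-z) = -\theta_1(z)$ is read off from the series by the substitution $n \mapsto -n-1$, which fixes the Gaussian factor and flips the sign of the linear term; it forces $\theta_1'$ to be even, so $\rho(-z) = \theta_1'(z)/(-\theta_1(z)) = -\rho(z)$. Differentiating $\theta_1(z+1) = -\theta_1(z)$ gives $\theta_1'(z+1) = -\theta_1'(z)$, and the two sign changes cancel in the quotient, yielding $\rho(z+1) = \rho(z)$. The only genuinely computational case is the $\tau$-shift: differentiating $\theta_1(z+\tau) = -e^{-\pi i(\tau + 2z)}\theta_1(z)$ produces $\theta_1'(z+\tau) = -e^{-\pi i(\tau + 2z)}\bigl(\theta_1'(z) - 2\pi i\,\theta_1(z)\bigr)$, and upon dividing by $\theta_1(z+\tau)$ the exponential prefactor cancels identically, leaving $\rho(z+\tau) = \rho(z) - 2\pi i$.

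Finally, for the behaviour at the origin I would use that oddness forces the Taylor expansion of $\theta_1$ about $0$ to contain only odd powers and that the zero is simple, so $\theta_1(z) = \theta_1'(0)z + \tfrac{1}{6}\theta_1'''(0)z^3 + O(z^5)$ and $\theta_1'(z) = \theta_1'(0) + \tfrac{1}{2}\theta_1'''(0)z^2 + O(z^4)$. Factoring $\theta_1'(0)$ from numerator and denominator and expanding the geometric series, the $z^2$ coefficients combine as $\tfrac12 - \tfrac16 = \tfrac13$, giving $\rho(z) = z^{-1} + \tfrac{\theta_1'''(0)}{3\theta_1'(0)}\,z + O(z^3)$ as claimed (the $O(u^3)$ in the statement being a typographical slip for $O(z^3)$). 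None of these steps poses a real obstacle; the only point demanding care is tracking the exponential factor in the $\tau$-shift so that the additive constant $-2\pi i$ emerges correctly, which is exactly the source of the nontrivial additive quasi-periodicity of $\rho$.
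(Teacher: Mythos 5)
Your proposal is correct and follows essentially the same route as the paper, which simply asserts that all three properties are ``easily seen from the behavior of the theta functions'' established in the preceding subsection of the appendix; you have just carried out those transfers explicitly (simple zeros of $\theta_1$ giving simple poles of $\rho$, differentiated quasi-periodicity laws giving the additive shift $-2\pi i$, and the odd Taylor expansion of $\theta_1$ giving the coefficient $\tfrac12-\tfrac16=\tfrac13$). All computations check out, including your observation that the $O(u^3)$ in the statement should read $O(z^3)$.
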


\addcontentsline{toc}{section}{References}
\bibliographystyle{plain}
\bibliography{Biblio}
\end{document}